\newtheorem{thm}{Theorem}[section]
\newtheorem{theorem}[thm]{Theorem}
\newtheorem{lemma}[thm]{Lemma}
\newtheorem{example}{Example}
\newtheorem{definition}[thm]{Definition}
\newtheorem{remark}{Remark}
\newcolumntype{M}[1]{>{\centering\arraybackslash}m{#1}}
\newcolumntype{N}{@{}m{0pt}@{}}
\date{}
\begin{document}
\title{Storage-Efficient Shared Memory Emulation}

\author{
Marwen Zorgui$^*$, Robert Mateescu$^{**}$, Filip Blagojevic$^{**}$, \\Cyril Guyot$^{**}$, and Zhiying Wang$^*$\\
$^*$CPCC Center, 
University of California, Irvine, $^{**}$Western Digital Research\\
	{  \{mzorgui, zhiying\}@uci.edu}\\
 	{\{robert.mateescu, filip.blagojevic, cyril.guyot\}}@wdc.com
	}
\date{}
\maketitle
\thispagestyle{empty}
%
\begin{abstract}
Improvements in communication fabrics have enabled access to ever larger pools of data with decreasing access latencies, bringing large-scale memory fabrics closer to feasibility. However, with an increase in scale come new challenges. Since more systems are aggregated, maintaining a certain level of reliability requires increasing the storage redundancy, typically via data replication. The corresponding decrease in storage efficiency has led system designers to investigate the usage of more storage-efficient erasure codes. In parallel, storage redundancy introduces consistency challenges that require careful management.

We study the design of storage-efficient algorithms for emulating atomic shared memory over an asynchronous, distributed message-passing system. Our first algorithm is an atomic single-writer multi-reader algorithm based on a novel erasure-coding technique, termed \emph{multi-version code}.
Next, we propose an extension of our single-writer algorithm to a multi-writer multi-reader environment. Our second algorithm combines replication and multi-version code, and is suitable in situations where we expect a large number of concurrent writes. Moreover, when the number of concurrent writes is bounded, we propose a simplified variant of the second algorithm that has a simple structure similar to the single-writer algorithm.

Let $N$ be the number of servers, and the shared memory variable be of size 1 unit. 
Our algorithms have the following properties: (i) 
The write operation terminates if the number of server failures is bounded by a parameter $f$.
The algorithms also guarantee the termination of the read as long as the number of writes concurrent with the read is smaller than a design parameter $\nu$, and the number of server failures is bounded by $f$. 
 (ii) The overall storage size for the first algorithm, and the steady-state storage size for the second algorithm, are all $N/\lceil \frac{N-2f}{\nu} \rceil$ units. Moreover, our simplified variant of the second algorithm achieves the worst-case storage cost of $N/\lceil \frac{N-2f}{\nu} \rceil$, asymptotically matching a lower bound by Cadambe et al. for $N \gg f, \nu \le f+1$.
 (iii) The write and read operations only consist of a small number (2 to 3) of communication rounds. 
 (iv) For all algorithms, the server maintains a simple data structure. A server only needs to store the information associated with the latest value it observes, similar to replication-based algorithms.  
\end{abstract}
\newpage
\section{Introduction}
The emulation of a consistent, fault-tolerant, read-write shared memory in a distributed, asynchronous message-passing network has been an active area of research in distributed computing theory. Several applications demand concurrent and consistent access to the stored value by multiple writers and readers. In their celebrated paper \cite{ABD}, Attiya, Bar-Noy, and Dolev proposed a fault-tolerant algorithm (ABD algorithm) for emulating a shared memory that achieves atomic consistency (linearizability) \cite{herlihy1990linearizability,lamport1986interprocess}. 
ABD uses a replication-based storage scheme at the servers to attain fault
tolerance. In \cite{fan2003efficient}, a two-layer replication based system is also presented, in which one layer is dedicated exclusively to metadata, and the other layer for storage. 
Variations of these algorithms appear in practical systems \cite{lakshman2010cassandra,lynch2002rambo}.

Following \cite{ABD,fan2003efficient}, several papers  
 developed algorithms that use erasure coding instead of replication for fault tolerance, with the goal of improving upon the storage efficiency. 
In erasure coding, each server stores a function of the value called a coded symbol. A decoder can recover the value by accessing a number (called the \emph{coding parameter}) of coded symbols. The number of bits used to represent a coded symbol is typically much smaller than the number of bits used to represent the value. Erasure coding is well known to lead to smaller storage costs as compared to replication \cite{patterson1988case}.
Erasure-code based implementations of consistent data storage appear in \cite{cadambe2017coded, dutta2008optimistic, konwar2016storage, konwar2017layered, spiegelman2016space} for crash failures. 
In \cite{CT, dobre_powerstore, HGR} erasure codes are used in algorithms for implementing atomic memory that tolerate Byzantine failures. 
In \cite{dutta2008optimistic,guerraoui2008collective,konwar2016radon}, the authors provide algorithms that permit repair of crashed servers, while implementing consistent storage. 
Bounds on the performance costs for erasure-code based implementations appear in \cite{cadambe2016information,chockler2017space,spiegelman2016space}.

\noindent\textbf{Contributions}: We consider a distributed message-passing network with fixed $N$ nodes and reliable channels. Nodes can have crash failures.  Up to $f$ server nodes can fail, $f\le (N-1)/2$, and an arbitrary number of client nodes can fail. We first propose a single-writer multi-reader atomic shared memory emulation algorithm, based on which a multi-writer multi-reader algorithm is then presented.

The algorithms guarantee the termination of the read as long as the number of writes concurrent with the read is smaller than a \emph{liveness parameter} $\nu$, and the number of server failures is bounded by $f$. It also ensures the termination of write operations if the number of server failures is no more than $f$.

In the steady state, in all of our algorithms, servers store only a fraction  of the object value. In particular, assume that the value is of size 1 unit.
The total storage size for the single-writer algorithm and the steady-state storage size for the multi-writer algorithm are both $\frac{N}{k}$ units; the worst-case total storage size for the multi-writer algorithm is $k+2f+\frac{N-k-2f}{k}$ units, where $k=\lceil \frac{N-2f}{\nu} \rceil$ is the coding parameter.
When the number of concurrent writes in the system is bounded, we propose a simplified variant of our multi-writer algorithm that has an asymptotic optimal worst-case storage cost matching a lower bound in \cite{cadambe2016information}.

Our algorithms have a simple structure reminiscent of the ABD algorithm as servers only need to store information associated with the latest value they observe, without any logs or history. We call it \emph{in-place update}. The write and read operations only consist of 2 or 3 communication rounds.

The coding parameter in our algorithms is motivated by the work in \cite{wang2017multi}. Traditional erasure codes assume that only one version of the value needs to be encoded and decoded.
In \cite{wang2017multi}, the authors introduced a new family of erasure codes called \emph{multi-version coding}, that allows multiple versions of the value to be present in the system.
Our coding parameter is due to one construction of multi-version codes, enabling the servers to store only a single version and use in-place update.
This also allows us to derive exact statements on the liveness guarantee of read operations. 

From a practical perspective, in  distributed memory- or DRAM-based storage systems such as Memcached \cite{memcached} and RAMCloud \cite{ousterhout2010case}, storage space is costly and management of several versions of the same data object is challenging. Hence, the low storage cost and the simplicity of our algorithms make them attractive in such systems.

\noindent\textbf{Related work}:
Assume that a read operation is concurrent with several writes, including failed writes. Then, in erasure coding, it is possible that the reader obtains information of different values, but does not have sufficient number of coded symbols to decode and return any value. In order to handle the difficulty brought by
concurrent writes, several techniques and liveness guarantees have been proposed, described below.

Algorithms in \cite{cadambe2017coded, dobre_powerstore, HGR} store history of received coded symbols, and hide ongoing writes from a read until enough number of coded symbols have been propagated to the servers. However, the worst-case storage cost grows unbounded with the number of concurrent writes. 
Algorithms in \cite{dutta2008optimistic,konwar2016storage,konwar2017layered} propagate full replicas at a first phase before performing erasure coding at a second phase. In SCCK \cite{spiegelman2016space} a writer communicates full replicas, and each server, upon receipt of a full replica, either stores  a coded symbol or the full replica depending on its state, leading to a worst-case storage of $2N$ units.
ORCAS-A \cite{dutta2008optimistic} is similar to our algorithms in that the server stores only the latest version. 
However, the read operation uses reader registration to be explained later. 
The algorithm in \cite{konwar2016storage} achieves the lowest overall storage in the steady state at the expense of costly write communication on the order of $N^2$ units. 
In \cite{konwar2017layered}, replicas of all ongoing writes are stored in an edge layer of servers, and coded symbols are stored in a back-end layer. The total worst-case storage can be unbounded even with garbage collection in the edge layer.

The strongest liveness guarantee for read operations is \emph{wait-freedom}, which guarantees that a non-failed process completes its execution irrespective of the actions of other processes, implemented by \emph{reader registration} in \cite{CT, dutta2008optimistic, konwar2016storage, konwar2017layered}. That is, a read operation registers itself at the servers it contacts, and keeps receiving symbols from them until successful recovery of a value. However, the amount of communication of a read operation can be unbounded and depends on concurrent writes. In contrast, our algorithms guarantee liveness if the number of concurrent writes with a read is smaller than $\nu$, and only uses 2 or 3 rounds of communications. A similar liveness setting is found in CASGC \cite{cadambe2017coded}, but we will demonstrate the advantage of our algorithms in Section \ref{sec:discussion} in terms of the storage cost and protocol simplicity. 
SCCK \cite{spiegelman2016space} satisfies the \emph{finite-write termination} liveness, namely, in every execution with finitely many writes, every read operation invoked by a non-failed reader terminates \cite{abraham2006byzantine}.
In HGR \cite{HGR}, read operations satisfy \emph{obstruction-freedom}, that is, a read returns if there is a sufficiently long period during the read when no other operation takes steps. 

{\bf Organization:} In Section \ref{sec:preliminaries}, we give an overview of the algorithms and introduce useful definitions and lemmas. The single-writer and multi-writer algorithms are presented and analyzed in Sections \ref{Section_Single_Writer} and \ref{multi_writer_algorithm_1}. Detailed comparisons with previous algorithms and conclusions are shown in Section \ref{sec:discussion}. 

\section{Preliminaries and Overview of Algorithms}
\label{sec:preliminaries}
We study the emulation of a shared atomic memory in an asynchronous message-passing network. We assume a single data object without loss of generality.  The number of server nodes is denoted as $N$. The number of client nodes can be unbounded. All the client and server nodes are connected by point-to-point reliable channels, and a node failure is assumed to be a crash failure. 
Every new invocation at a client waits for a response of a preceding invocation at the same client (called well-formedness). 
We require the following safety and liveness properties, irrespective of the number of client failures.

$\bullet$ \emph{Atomicity:} The algorithm must emulate a shared atomic read-write object that supports concurrent access by the clients 
in the system, where the observed global external behaviors {``look like''} the object is being accessed sequentially \cite{lamport1986interprocess}.

$\bullet$ \emph{$\nu$-concurrency wait-freedom:}  
We require a write operation to terminate if the number of server failures in the execution is bounded by a parameter $f$, and a read operation to terminate if the number of server failures is bounded by $f$ and the number of concurrent writes is less than a parameter $\nu$. We call such liveness property $\nu$-concurrency wait-freedom, and $\nu$ the liveness parameter.

In practice, our algorithm does not need to know the exact worst-case concurrency level over all executions. Instead, it can use $\nu$ as an estimate of the concurrency, say, for 90\% of the read operations. If a reader is not able to return the value, it can re-try and complete the read if the number of current writes reduces to less than $\nu$.

We define a quorum set $Q$ to be a subset of the server nodes, such that its size satisfies $ |Q| \geq N-f$. It follows that for any two quorums $Q_1,Q_2$, we have $ | Q_1 \cap Q_2 | \geq N-2f$.
We assume that every data value comes from a finite set $\mathcal{V}$. In this paper we refer to $\log_2|\mathcal{V}|$ as 1 unit.
We also arbitrarily choose $v_0$ from $\mathcal{V}$ to be a default value. 
Different versions of the data value are associated with different \emph{tags}. 
We say that a  tag $t$ is \emph{decodable} if the read operation can recover the value corresponding to tag $t$. 
 We let $\Phi$ be an $(N,k)$ maximum distance separable code (e.g. Reed-Solomon code) that takes a value in $\mathcal{V}$ as input and outputs $N$ coded symbols in $\mathcal{W}$, where $\log_2|\mathcal{W}| = \frac{1}{k} \log_2|\mathcal{V}|$, corresponding to $\frac{1}{k}$ unit. Any $k$ of the $N$ coded symbols suffice to decode the value. 
 We set the coding parameter $k$ to be $k=\lceil\frac{N- 2 f}{\nu}  \rceil$. The choice of $k$ is motivated by multi-version codes \cite{wang2017multi}.

\begin{remark}
\label{choice_nu}
For fixed design parameters $f,\nu,N$, the coding parameter is $k=\lceil\frac{N- 2 f}{\nu}  \rceil$. In fact, we can use only $\widetilde{N}=(k-1)\nu+2f+1 \le N$ nodes and do not use the remaining nodes, while keeping the same the coding parameter $\lceil\frac{\widetilde{N}- 2 f}{\nu}  \rceil=\lceil\frac{N- 2 f}{\nu}  \rceil$. Throughout the paper, we will use the reduced number of nodes, and assume the integer $k$ satisfies
\begin{equation}
k= \lceil\frac{N- 2 f}{\nu}  \rceil=1 + \frac{N - (2f+1)}{\nu}.
\label{N_value}
\end{equation}
\end{remark}
\noindent\textbf{Storage and communication cost definitions}: The \emph{storage cost} of an algorithm is defined to be the overall storage size of all the servers. 
In the algorithms that we formulate, each server node stores a list of pairs each of the form $(t, w)$, where  $t$ is a tag, and $w \in \mathcal{V} \cup \mathcal{W}$ depends on the value with tag $t$. In our analysis of the storage cost, we neglect the cost of the tags and other metadata; so the storage cost of an algorithm is measured as the size of $w$'s. 
We define a \emph{steady-state} point in an execution to be a point for which there is no ongoing write, and the completed writes have delivered their messages to all live servers. The steady-state storage cost is the storage cost of a steady-state point. The \emph{worst-case} storage cost corresponds to the largest storage cost among all points in all executions.
The \emph{communication cost} of a read (or write) is defined to be the largest total number of communicated bits associated with the data value over the network, among all read (or write) operations of all executions. Metadata bits are again neglected.

\noindent\textbf{Algorithms overview}:\\
$\bullet$ Algorithm \ref{sw_algo}. In Section \ref{Section_Single_Writer}, we describe a single-writer multi-reader algorithm, referred to as Algorithm \ref{sw_algo}. 
The write operation has one phase, where the value is encoded using an $(N,k)$ erasure code and propagated to at least $N-f$ servers.  
The read operation is carried out in two phases, one for getting values, and one for writing-back the decoded value. 

\noindent $\bullet$ Algorithm \ref{Algorithm_mw_1}. In Section \ref{multi_writer_algorithm_1}, we extend Algorithm \ref{sw_algo} to the multi-writer multi-reader setting, referred to as Algorithm \ref{Algorithm_mw_1}. The write protocol of Algorithm \ref{Algorithm_mw_1} has a \emph{pre-write} phase in which full replicas are propagated to at least $k+f$ servers, followed by a \emph{finalize} phase, where coded symbols replace the replicas in a quorum of servers of size $N-f$. 
Moreover, in \cite{cadambe2016information} a lower bound on the storage cost is developed under the same liveness condition as our algorithms.  We show that a variant of Algorithm \ref{Algorithm_mw_1}, referred to as Algorithm \ref{Algorithm_mw_1}-A, is essentially storage-optimal, based on the lower bound of \cite{cadambe2016information}.

The storage and communication costs of our algorithms, ABD, and two previous coding-based algorithms are shown in Table \ref{tab:contribution}. The two coding-based algorithms are listed because they employ somewhat similar protocol structures as ours. The parameter $\nu$ in our algorithms illustrates the tradeoff between liveness of read operations and the storage size. The smaller $\nu$ is, the smaller the storage size is, but the smaller the number of concurrent writes that a successful read can tolerate. In particular, when $\nu \geq N-2f$, our algorithms reduce to ABD.
We compare the storage of multi-writer algorithms. Assume $\nu<2f+1, N \gg f$. Then, in descending order of the worst-case storage cost, we have CASGC, SCCK, ABD, and Algorithm \ref{Algorithm_mw_1}.
 In descending order of the steady-state storage cost, we have ABD, CASGC, Algorithm \ref{Algorithm_mw_1}, and SCCK.
More detailed discussions can be found in Section \ref{sec:discussion}.

 \begin{center}
\begin{tabular}{|c|c|c|c|c|}
\hline
~  & worst-case  & steady-state  & write  & read  \\
~ & storage & storage & communication & communication \\
\hline
Alg. \ref{sw_algo},  \ref{Algorithm_mw_1}-A  &  $\frac{N}{k}$ & $\frac{N}{k}$    &  $\frac{N}{k}$ & $\frac{2N}{k}$  \\
\hline
Alg. \ref{Algorithm_mw_1} &  $k+2f + \frac{N-k-2f}{k} $ &  $\frac{N}{k}$ &  $ k+2f + \frac{N-k-2f}{k}$ &  $2(k+2f + \frac{N-k-f}{k})$   \\
\hline
ABD \cite{ABD} & $2f+1$ & $2f+1$ & $2f+1$ & $2(2f+1)$ \\
\hline
CASGC \cite{cadambe2017coded} & unbounded & $\frac{\nu N}{N-2f}$ & $\frac{ N}{N-2f}$ & $\frac{2 N}{N-2f}$ \\
\hline
SCCK \cite{spiegelman2016space} & $2N$ & $\frac{ N}{N-2f}$ & $N$ & $2N$\\
\hline
\end{tabular}
 \captionof{table}{Summary of the storage and write communication costs of the different algorithms. Here $k=\lceil\frac{N- 2 f}{\nu}  \rceil$. The size of a data value is 1 unit. 
 For the worst-case storage, we assume an arbitrary number of concurrent writes in an execution.
 Assume our algorithms use liveness parameter $\nu$, ABD uses only $2f+1$ servers, GASGC uses parameters $(k_{CASGC},$ $\delta)= (N-2f,\nu-1)$, and SCCK uses coding parameter $k_{SCCK}=N-2f$.
 ABD satisfies the strongest liveness which is wait-freedom; our proposed algorithms and CASGC satisfy $\nu$-concurrency wait-freedom; SCCK has weaker liveness, namely, finite-write termination. 
\label{tab:contribution} }
\end{center}

Next we state a lemma of a sufficient condition for atomicity, which will be used to prove correctness for our algorithms.

\begin{lemma}[Lemma 13.16 of \cite{Lynch1996}]
	\label{lem:partialorder}
	{Let $\beta$ denote of a sequence of actions of the external interface of a read/write object. Suppose $\beta$ is well formed for each client and contains no incomplete operations. Let $\Pi$ be the set of all operations in $\beta$.}
A sufficient condition for atomicity of $\beta$ is: there exists a partial ordering $\prec$ of all the operations in $\Pi$, satisfying the following properties:\\
(1) If the response for $\pi_1$ precedes the invocation for $\pi_2$ in $\Pi$, then it cannot be the case that $\pi_2 \prec \pi_1$. \\
(2)  If $\pi_1$ is a write operation in $\Pi$ and $\pi_2$ is any operation in $\Pi$, then either $\pi_1 \prec \pi_2$ or $\pi_2 \prec \pi_1$. \\
(3) The value returned by each read operation is the value written by the last preceding write operation according to $\prec$ (or the default value, if there is no such write).
\end{lemma}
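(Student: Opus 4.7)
The plan is to construct an explicit sequential execution $\sigma$ of the operations in $\Pi$ and verify directly that $\beta$ and $\sigma$ satisfy the standard definition of atomicity (every client sees the same external behavior, real-time order is preserved, and reads return the value of the most recent preceding write).

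First I would extend the given partial order $\prec$ to a total order $\prec^*$ on $\Pi$ by a standard linear-extension argument (a topological sort, or transfinite induction if $\Pi$ is infinite). Then I would take $\sigma$ to be the sequential execution whose operations are arranged according to $\prec^*$, each treated as an atomic action occurring at a distinct point. By construction $\sigma$ is sequential. Well-formedness of $\beta$ together with property (1) guarantees that, for each individual client, the operations appear in the same order in $\sigma$ as in $\beta$: if a client invokes $\pi_2$ only after $\pi_1$ completes in $\beta$, then property (1) forbids $\pi_2 \prec \pi_1$, so $\pi_1 \prec^* \pi_2$. The same argument, applied to arbitrary pairs whose response/invocation are ordered in $\beta$, shows that $\prec^*$ respects the real-time precedence of $\beta$, which is the real-time axiom of atomicity.

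The remaining obligation is to show that each read in $\sigma$ returns the value prescribed by the sequential semantics, i.e., the value of its $\prec^*$-immediate preceding write. Fix a read $\pi_2$. By property (2), every write in $\Pi$ is $\prec$-comparable to $\pi_2$, so the set $W_{\pi_2}$ of writes $\pi_1$ with $\pi_1 \prec \pi_2$ is already determined by $\prec$ alone, independently of how we extend to $\prec^*$. Applying property (2) again to any two members of $W_{\pi_2}$, they too are comparable, so $W_{\pi_2}$ is a $\prec$-chain and (assuming the usual finiteness of the writes preceding any given operation, or the default value if $W_{\pi_2}$ is empty) possesses a $\prec$-maximum $\pi_1^*$. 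Then $\pi_1^*$ remains the immediate $\prec^*$-predecessor of $\pi_2$ among writes, so property (3) says $\pi_2$ returns the value written by $\pi_1^*$ in $\beta$, matching its return value in the sequential execution $\sigma$.

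The main obstacle is the single subtle point that the construction of $\prec^*$ must not spoil property (3): one could worry that extending the partial order might insert some new write between $\pi_1^*$ and $\pi_2$ and thereby change the expected return value. The argument above resolves this by using property (2) twice, first to show that $W_{\pi_2}$ is invariant under extension (every write is already comparable with $\pi_2$) and then to show that $W_{\pi_2}$ is a chain with a well-defined maximum (every pair of writes is comparable). Once this is in place, the three atomicity requirements follow mechanically, and the lemma is established.
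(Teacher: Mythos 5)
The paper does not prove this statement; it is imported verbatim as Lemma~13.16 of \cite{Lynch1996}, so there is no internal proof to compare against. Evaluating your proposal on its own terms, the overall plan (extend $\prec$ to a total order $\prec^*$, read $\prec^*$ as a linearization, check legality via (2) and (3)) is a legitimate route, and your analysis of return values is correct: property~(2) guarantees that the set $W_{\pi_2}$ of writes below $\pi_2$ is fixed before extension and forms a chain, so its maximum $\pi_1^*$ remains the $\prec^*$-immediate write predecessor, and property~(3) then gives the right returned value.

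There is, however, a genuine gap in the real-time step. You write that property~(1) forbids $\pi_2 \prec \pi_1$ and conclude \emph{``so $\pi_1 \prec^* \pi_2$''}; that does not follow. Property~(1) only excludes one direction of $\prec$ between real-time-ordered operations; for two operations that are $\prec$-\emph{incomparable} (which happens, e.g., for two reads, since property~(2) gives comparability only when at least one operation is a write), a ``standard linear-extension argument'' is free to put them in either order, including the order that violates real-time precedence. What you actually need is to extend the transitive closure of $\prec \cup <_{\mathrm{rt}}$, where $<_{\mathrm{rt}}$ is the real-time precedence relation, and you must first prove this union is acyclic. That fact is true here and follows from property~(1): collapsing a hypothetical cycle to alternating form $a_1 <_{\mathrm{rt}} b_1 \prec a_2 <_{\mathrm{rt}} b_2 \prec \cdots \prec a_1$, each $\prec$-edge $b_i \prec a_{i+1}$ together with the contrapositive of property~(1) forces $a_{i+1}$'s response not to precede $b_i$'s invocation, while each $<_{\mathrm{rt}}$-edge gives a strict inequality in the other direction; chaining produces $a_1$'s response time strictly less than itself, a contradiction. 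With that argument inserted, your construction becomes sound. One further caveat: your parenthetical ``assuming the usual finiteness'' is actually a substantive hypothesis --- Lynch's Lemma 13.16 includes a fourth condition (each operation has only finitely many $\prec$-predecessors) that the paper's restatement drops, and without it $W_{\pi_2}$ need not possess a maximum; you should state and use it explicitly rather than relegating it to a parenthesis.
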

We now define the partial ordering that we use in conjunction with Lemma \ref{lem:partialorder} in the correctness proofs. We define tags of operations for each algorithm in its corresponding section.
\begin{definition}[Partial Ordering $\prec$]
	\label{def:partialorder}
Consider an execution $\alpha$ and consider two operations $\pi_1, \pi_2$ that complete in $\alpha$. Let $T(\pi_1)$ and $T(\pi_2)$ respectively denote the tags of operations $\pi_1$ and $\pi_2$. Then we define the partial ordering on the operations as: $\pi_1 \prec \pi_2$ if \\
(1)  $T(\pi_1) < T(\pi_2)$; or \\
(2)   $T(\pi_1) =  T(\pi_2)$ for write $\pi_1$ and read $\pi_2$.
\end{definition}
\begin{algorithm}
\caption{\textbf{: single-writer setting} }
\begin{algorithmic}  [1]
\Statex  \textbf{Write protocol} \newline
state variable: Tag $t$,  $t \in \mathbb{N}$\newline
 {\emph{initial state:} Tag $0$.} \newline
 \emph{Input:} Value $v$, {$v \in \mathcal{V}$}.
\State	Increment the state, that is set $t\leftarrow t+1$. \label{increment_tag_sw}
\State 	Use the $(N,k)$ code to get $N$ coded symbols. Denote $(y_1, y_2,\ldots, y_N) = \Phi(v)$. 
 \State Send $put(t,y_s)$ to server $s,$ for every $s \in \{1,2,\ldots,N\}$. Await acknowledgement from a quorum, and then terminate.
\newline  
\Statex \textbf{Read protocol}
\State   Send query request $get$ to all servers, await pairs $(t,code)$ from a quorum.  
\State   Let $R$ be the set of response pairs. 
\State Let $T$ be the set of decodable tags $t$ occurring in $R$ such that \label{read_start_alg_1}  
\State  $(i)$ $t$ has at least $f+1$ coded symbols,  \label{algo1_line_1}
\State $(ii)$ Or, the number of tags strictly higher than $t$ is at most $\nu$.
 \label{algo1_line_2}
     \If{ $T \neq \emptyset$} 
     \State  Let $t = \max(T)$, and $v$ its value.  \label{read_finish_alg_1}
     		 \State  \emph{write}$\_$\emph{back}$(t,v)$    		 
     \Else
     \State $\_$\emph{abort}$\_$
     \EndIf    
 \Procedure{ \emph{write}$\_$\emph{back} }{$ t,v $}
\State		  Let $(y_1,y_2,\ldots, y_N) = \Phi(v)$.
 \State		  {Propagate} $put(t,y_s)$ to server $s$ for every $s \in \{1,2,\ldots,N\}$, await acknowledgement from a quorum, and then terminate by returning $v$ .
\EndProcedure
\newline
\Statex \textbf{Server $s$ protocol}
      \Statex \emph{state variable:} A pair $(t,{y})$, where {$t \in \mathbb{N}$, $y \in \mathcal{W}$.}
	  \Statex {\emph{initial state:} Store the default pair $(0, y_s)$, where $y_s$ is the $s$th component of $\Phi(v_0),$ where $v_0 \in \mathcal{V}$ is the default initial value.}
    \State On receipt of \emph{get}: respond with $(t,{y})$.
   \State On receipt of {$put(t_{new},y_{new})$}: If $t_{new}>t$, then set {$t \gets t_{new}$ and ${y} \gets y_{new}.$} In any case respond with acknowledgement.
\end{algorithmic}
 \label{sw_algo}
\end{algorithm}
\section{Single-Writer Algorithm}
\label{Section_Single_Writer}
\subsection{Algorithm Description}
In this section, we describe our single-writer multi-reader algorithm (See Algorithm \ref{sw_algo}).
The different phases of the write and read protocols are executed sequentially. 
In each phase, a client sends messages to servers to which the non-failed servers respond. Termination of each phase depends on getting responses from at least one quorum.
In this algorithm, the write protocol increments the tag, and writes the value to a quorum of servers using the erasure code. 
The read protocol has an $\_ abort \_$ internal action. In case the $\_ abort \_$ action is invoked, the client does not return and the operation which invokes it does not terminate. The action indicates to the reader that the concurrency bound was violated causing the read not to terminate. From the viewpoint of a practical storage system, we note that the $\_abort \_$ action can prompt the reader to invoke a read request again; however, we do not formally incorporate such an invocation in the description of Algorithm \ref{sw_algo}. We comment on the possibility of invoking multiple rounds of read briefly in Section \ref{sec:liveness}. The read protocol has a procedure called \emph{write}$\_$\emph{back}, which is triggered whenever the reader can recover a certain value $v$ from the responses and safely return it. 
The read returns a value with tag $t$ that is decodable and also satisfies some conditions, as specified by Lines \ref{read_start_alg_1} through \ref{read_finish_alg_1} in Algorithm \ref{sw_algo}.
For a tag-value pair $r=(t, v)$, we write $tag(r)=t$.
\begin{remark}
\label{ABD_equivalence}
If $\nu \geq N-2f$, the code used in Algorithm \ref{sw_algo} specializes to the replication-based ABD algorithm. We have $k=1$, i.e., every server stores a full replica. Moreover, the reader recovers the value corresponding to the highest tag observed among the responses and the value satisfies the condition in Line \ref{algo1_line_2}. 
\end{remark}
Throughout the section, we assume that $\nu \geq 2$ and $k>1$.
\subsection{Safety Properties}
\label{sec:safety}
In this section, we present safety properties satisfied by Algorithm \ref{sw_algo}. 
We first show in Lemma \ref{data_persistence_sw} that there always exists some value that can be recovered from the servers during the execution of Algorithm \ref{sw_algo}. 
Then, we show that Algorithm \ref{sw_algo} emulates an atomic shared memory in Theorem \ref{thm:atomicity},
using Lemma \ref{lem:partialorder} on the partial ordering $\prec$ in Definition \ref{def:partialorder}. To show this, we  prove Lemmas \ref{lem:a_safety_property}, \ref{lem:tagsordering_read}, \ref{lem:tagsordering_write} and \ref{lem:writeTagDifferentAlg1}.
We finally prove a safety property in Lemma \ref{lem:a_secondsafety_property} that will be used in Section \ref{sec:liveness} where we describe liveness properties.

{We now define the \emph{tag of an operation}. In the definition, we use the fact that every read or write operation that completes propagates \emph{put} messages to the servers with a particular tag.} Note that the tag of an operation is defined for every operation that completes in an execution. Furthermore, the tag is not defined for read operations that abort, since these operations do not propagate a \emph{put} message and are not considered complete.
\begin{definition}[Tag of an operation $\pi$]
\label{definition_tags_single_writer}
	{ Let $\pi$ be an operation in an execution $\alpha$.} {The tag of operation $\pi$ is defined to be the tag associated with the \emph{put} messages that the operation propagates to the servers.} 
\label{def:tag_definition}
\end{definition}
\begin{lemma}[Persistence of data]
\label{data_persistence_sw}
The value written by either the latest complete write or a newer write is available from every set of at least $N-f$ servers.
\end{lemma}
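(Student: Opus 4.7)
The plan is to fix an arbitrary point $\tau$ in the execution, identify the latest write that has completed by $\tau$, and argue that any set $S$ of at least $N-f$ servers must contain at least $k$ coded symbols sharing a common tag that corresponds to either that write or a newer one.

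Let $\pi_w$ be the latest complete write at $\tau$, with tag $t^*$ and value $v^*$. Because $\pi_w$ terminated, it collected acknowledgements from some quorum $Q_1$ with $|Q_1| \geq N-f$, so every server in $Q_1$ has (at $\tau$) processed a \emph{put} with tag $t^*$; since a server only overwrites when the incoming tag is strictly larger, each server in $Q_1$ currently stores a pair with tag $\geq t^*$. For any candidate set $S$ with $|S| \geq N - f$, the standard quorum-intersection bound gives $|S \cap Q_1| \geq N - 2f$, so we have $N-2f$ servers inside $S$ all holding tags $\geq t^*$.

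The key step is to bound the number of distinct tags that can occur in $S \cap Q_1$. Because the writer is single and well-formed, writes are issued strictly sequentially, so at $\tau$ there is at most one in-progress write, which necessarily carries tag $t^* + 1$. Consequently, every server in $S \cap Q_1$ stores a pair whose tag is either $t^*$ (still that of $\pi_w$) or $t^* + 1$ (updated by the in-progress write). By pigeonhole, one of these two tags is held by at least $\lceil (N-2f)/2 \rceil$ of the servers in $S$. Since $\nu \geq 2$ by the standing assumption of the section, the monotonicity of $\lceil (N-2f)/\cdot \rceil$ gives $k = \lceil (N-2f)/\nu \rceil \leq \lceil (N-2f)/2 \rceil$. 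Thus at least $k$ coded symbols of $\Phi$ that share a common tag (either $t^*$ or $t^*+1$) are accessible within $S$, and the MDS property of $\Phi$ lets us decode the corresponding value, which is either $v^*$ or the value of the newer (in-progress) write, as required.

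I expect the only non-routine step to be the two-tag argument, which rests on the single-writer well-formedness together with the in-place server update rule; this is exactly the structural property that breaks in the multi-writer extension and motivates the more elaborate pre-write/finalize protocol of Algorithm~\ref{Algorithm_mw_1}. The remaining pieces, the quorum intersection bound, pigeonhole, and the inequality $k \leq \lceil (N-2f)/2 \rceil$, are routine.
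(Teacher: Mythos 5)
Your proof is correct and follows essentially the same approach as the paper's: bound the number of distinct tags that can appear in the relevant servers by $2$ using the single-writer well-formedness and the monotone server update rule, then apply pigeonhole and the inequality $k = \lceil (N-2f)/\nu \rceil \le \lceil (N-2f)/2 \rceil$ (valid since $\nu \ge 2$). In fact, your version is slightly more careful than the paper's: the paper's proof reasons directly about the quorum $Q_w$ that acknowledged the latest finished write and concludes that some tag occurs in at least $\lceil |Q_w|/2 \rceil \ge k$ servers, which shows the value is recoverable from $Q_w$, but does not immediately yield the ``from \emph{every} set of at least $N-f$ servers'' clause of the statement. By explicitly intersecting an arbitrary set $S$ with $|S| \ge N-f$ against $Q_1$ to obtain $|S \cap Q_1| \ge N-2f$ before invoking pigeonhole, you make the universally quantified claim go through cleanly.
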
 
\begin{proof}
Let $Q_w$ denote the quorum of servers that replied to the \emph{put} message of the latest finished write $\pi_w$ (if no write has finished in the execution, $Q_w$ can be quorum from the set of live of servers, and $T(\pi_w)$ is assumed to be 0). Thus, each server in $Q_w$ has a tag that is at least as large as $T(\pi_w)$. Because there is at most a single ongoing incomplete write operation for a single writer, the number of tags in $Q_w$ is at most 2. Thus, one of the tags, say $t$, appears in at least $\lceil \frac{|Q_w|}{2} \rceil\geq \lceil \frac{N-2f}{\nu} \rceil=k$ servers and $t \geq T(\pi_w)$. Therefore, the value corresponding to $t$ is available in the system.
\end{proof}
\begin{remark}
\label{instantaneous_image}
Lemma \ref{data_persistence_sw}  implies that a read operation $\pi_r$ can decode and return a value satisfying Line \ref{algo1_line_2}, if the reader gets responses that corresponds to the stored $(tag, element)$ pairs of a quorum at some point $P$ (we call it an instantaneous image).
\end{remark}
\begin{lemma}\label{lem1}
Consider any execution $\alpha$ of the algorithm and consider a write or read operation $\pi_1$ that completes in $\alpha$. Let $T(\pi_1)$ denote the tag of the operation $\pi_1$ and let $Q_1$ denote the quorum of servers from which responses are received by $\pi_1$ to its \emph{put} message.  Consider a read operation $\pi_r$ in $\alpha$ that is invoked after the termination of the write operation $\pi_1$. Suppose that the read $\pi_r$ receives responses to its \emph{get} message from a quorum $Q_r$. Then, \\
(1) Every server $s$ {in} $Q_1 \cap Q_r$ responds to the \emph{get} message from $\pi_r$ with a tag that is at least as large as $T(\pi_1)$.\\
(2) If, among the responses to the \emph{get} message of $\pi_r$ from the servers in $Q_{1} \cap Q_{r}$, the number of tags is at most {$\nu$}, then there is some tag $t$ such that 
 (i) $t \geq T(\pi_1)$, and 
  (ii) from the servers in {$Q_1 \cap Q_r$}, operation $\pi_r$ receives {at least} $k$ responses to its \emph{get} message with tag $t.$ 
  \label{lem:a_safety_property}
\end{lemma}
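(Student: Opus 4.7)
The plan is to treat the two parts essentially as independent consequences of two simple facts: tags at a server are monotonically non-decreasing, and any two quorums intersect in at least $N-2f$ servers.

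For Part (1), I would first fix any server $s\in Q_1\cap Q_r$ and argue about its local state over time. Since $s\in Q_1$, it sent an acknowledgement to the \emph{put} message of $\pi_1$, which by the server protocol happens only after processing that message; hence at the moment of acknowledgement its stored tag is at least $T(\pi_1)$. Because the server protocol updates its stored tag only via the rule ``if $t_{new}>t$ then $t\gets t_{new}$'', the stored tag at $s$ is non-decreasing in time. The assumption that $\pi_r$ is invoked after $\pi_1$ completes means $\pi_r$'s \emph{get} arrives at $s$ strictly after $s$ sent its acknowledgement to $\pi_1$, so $s$'s response to $\pi_r$ carries a tag at least $T(\pi_1)$.

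For Part (2), I would combine Part (1) with a pigeonhole argument using the intersection bound. Since $|Q_1|,|Q_r|\ge N-f$, we have $|Q_1\cap Q_r|\ge N-2f$; by Part (1), every response from $Q_1\cap Q_r$ to $\pi_r$'s \emph{get} carries a tag that is $\ge T(\pi_1)$. If these responses exhibit at most $\nu$ distinct tag values, then some tag $t$ is shared by at least $\lceil (N-2f)/\nu\rceil=k$ of these responses, and by Part (1) this $t$ satisfies $t\ge T(\pi_1)$.

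The only step that needs care is justifying that the server's stored tag at the instant it answers $\pi_r$'s \emph{get} is at least what it was when it acknowledged $\pi_1$'s \emph{put}; this rests on the monotonicity of the server state together with the temporal ordering guaranteed by the hypothesis ``$\pi_r$ is invoked after $\pi_1$ terminates'' and by well-formedness at the server. Once that is written out, the pigeonhole step for Part (2) is immediate, and no further machinery beyond the quorum intersection inequality and the definition $k=\lceil (N-2f)/\nu\rceil$ is needed.
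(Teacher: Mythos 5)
Your proposal is correct and takes essentially the same route as the paper: part (1) follows from the monotonicity of the server's stored tag after it acknowledges $\pi_1$'s \emph{put}, and part (2) follows by applying the pigeonhole principle to the $\ge N-2f$ responses from $Q_1\cap Q_r$, yielding some tag with at least $\lceil (N-2f)/\nu\rceil=k$ matching responses, which by part (1) is $\ge T(\pi_1)$. The extra care you take in justifying the temporal ordering (that $s$'s reply to the \emph{get} postdates its acknowledgement to $\pi_1$) is a sound elaboration of the same observation the paper makes more tersely.
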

\begin{proof}
\emph{Proof of (1).}
Consider any server $s$ in $Q_1 \cap Q_r$. From the server protocol we note that at every point after the reception of $\pi_1$'s \emph{put} message, it stores a tag that is no smaller than $T(\pi_1)$. 
So it responds to the \emph{get} message with a tag that is at least as large as $T(\pi_1)$. This completes the proof of (1).

\noindent \emph{Proof of (2).}
Among the responses from $Q_1 \cap Q_r$, the read $\pi_r$ receives at most $\nu$ different tags. By {the} Pigeonhole principle, there is at least one tag $t$ such that it receives at least  $\lceil \frac{|Q_1 \cap Q_r|}{\nu} \rceil$ responses with $t$. Since $|Q_{1} \cap Q_{r}| \geq N-2f$, we infer that the operation $\pi_r$ receives at least $\lceil\frac{N-2f}{\nu}\rceil = k$ responses with tag $t$. From (1), we infer that $t \geq T(\pi_1)$ to complete the proof.
\end{proof}
\begin{lemma}
	{Consider an execution $\alpha$ of Algorithm \ref{sw_algo}. Let $\pi_1$ be a write or read operation that completes in $\alpha$, and let $\pi_2$ be a read operation that completes in $\alpha$. }Let $T(\pi_1)$ denote the tag of operation $\pi_1$ and $T(\pi_2)$ denote the tag of operation $\pi_2$. If $\pi_2$ begins after the termination of $\pi_1$, then $T(\pi_2) \geq T(\pi_1)$.
	\label{lem:tagsordering_read}
\end{lemma}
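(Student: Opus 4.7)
My plan is to argue by contradiction: assume $T(\pi_2) < T(\pi_1)$ and exhibit a tag $t' > T(\pi_2)$ that also meets $\pi_2$'s read eligibility criteria, contradicting the fact that $\pi_2$ returned the maximum eligible tag.

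First I would fix notation. Let $Q_1$ denote the quorum that acknowledged the \emph{put} messages propagated by $\pi_1$ (well-defined whether $\pi_1$ is a write or a completing read, since the write-back procedure also propagates \emph{put}s to a quorum), and let $Q_r$ denote the quorum from which $\pi_2$ collected its \emph{get} responses $R$. Because $\pi_2$ is invoked after $\pi_1$ terminates, Lemma \ref{lem:a_safety_property}(1) tells me that every server in $Q_1 \cap Q_r$ responds to $\pi_2$'s \emph{get} with a tag at least $T(\pi_1)$, and standard quorum-intersection counting gives $|Q_1 \cap Q_r| \geq N - 2f$ and $|Q_r \setminus Q_1| \leq f$. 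Writing $t^{\star} = T(\pi_2) = \max(T)$, I would assume for contradiction that $t^{\star} < T(\pi_1)$.

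Next I would rule out that $t^{\star}$ was admitted into $T$ via condition (i). Any server responding with tag exactly $t^{\star}$ must sit outside $Q_1 \cap Q_r$ (whose servers all report tags $\geq T(\pi_1) > t^{\star}$), so at most $|Q_r \setminus Q_1| \leq f < f+1$ servers report $t^{\star}$; hence $t^{\star}$ must satisfy condition (ii), meaning at most $\nu$ distinct tags in $R$ exceed $t^{\star}$.

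The crux is then a pigeonhole step. The $\geq N - 2f$ servers in $Q_1 \cap Q_r$ all report tags strictly greater than $t^{\star}$, distributed across at most $\nu$ distinct values, so some tag $t' \geq T(\pi_1) > t^{\star}$ collects at least $\lceil (N-2f)/\nu \rceil = k$ responses and is therefore decodable. To close the contradiction I would check that $t'$ itself satisfies (ii): the set of tags in $R$ strictly above $t'$ is a proper subset of the (at most $\nu$) tags strictly above $t^{\star}$, because it omits $t'$ itself, so its size is at most $\nu - 1 < \nu$. Hence $t' \in T$ with $t' > t^{\star}$, contradicting $t^{\star} = \max(T)$. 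The only mildly delicate step is this last strict-subset count; the rest is a clean combination of Lemma \ref{lem:a_safety_property}(1), the quorum-intersection bound, and pigeonhole, so I do not anticipate a serious obstacle.
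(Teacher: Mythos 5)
Your proof is correct and follows essentially the same path as the paper's: assume $T(\pi_2) < T(\pi_1)$, use the quorum-intersection bound to rule out admission via condition~(i), then pigeonhole over $Q_1 \cap Q_r$ (which is exactly the content of Lemma~\ref{lem:a_safety_property}(2)) to exhibit a decodable tag $t' \geq T(\pi_1)$ contradicting maximality of $T(\pi_2)$. The only stylistic difference is that you explicitly verify that $t'$ satisfies condition~(ii) before concluding $t' \in T$, a small step the paper leaves implicit when it asserts ``the tag of the read operation should be at least $t$.''
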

\begin{proof}
Let $Q_1$ denote the quorum that responds to $\pi_1$'s \emph{put} message. Let $Q_2$ denote the quorum that responds to $\pi_2$'s \emph{get} message. Let $Q=Q_1 \cap Q_2.$ 
We prove the claim by contradiction. Suppose that $T(\pi_2) < T(\pi_1)$. Either Line ~\ref{algo1_line_1} or Line \ref{algo1_line_2} should be satisfied so that $\pi_2$ completes.

\noindent From (1) in Lemma \ref{lem:a_safety_property}, we infer that every server in $Q$ responds to the \emph{get} message of $\pi_2$ with a tag that is at least as large as $T(\pi_1)$. Because $T(\pi_1) > T(\pi_2)$, the value returned by $\pi_2$ must have been obtained using the responses from servers in $Q_2 \backslash Q_1 $. Thus $| \{u=T(\pi_2) | u= tag(r), \textrm{for some } r \in R\}| \le |Q_2 \backslash Q_1| \le N-(N-f) = f $. Thus Line \ref{algo1_line_1} cannot be satisfied.

\noindent Assume Line \ref{algo1_line_2} is satisfied. Because every server in $Q$ responds with a tag that it is at least as large as $T(\pi_1)$, which is greater than $T(\pi_2)$, and because $Q \subseteq Q_2$, we infer that the number of distinct response tags from $Q$ that are larger than $T(\pi_2)$ is at most $\nu$. 
Property (2) in  Lemma \ref{lem:a_safety_property} implies that there exists a tag $t \geq T(\pi_1)$ that appears in at least $k$ responses from $Q$. From the read protocol, we infer that the tag of the read operation should be at least $t$. That is, $T(\pi_2) \geq t$. But we know that $t \geq T(\pi_1) > T(\pi_2)$, which is a contradiction.
\end{proof}
 \begin{remark}
 \label{multiple_return_values}
There can be multiple values that can be safely returned by a read operation. Indeed, as can be inferred from the proof of Lemma \ref{lem:tagsordering_read}, any value satisfying Line \ref{algo1_line_1} in Algorithm \ref{sw_algo} can be returned safely, even if a higher tag can be recovered by the read operation.
 \end{remark}  
 \begin{remark}
\label{system_simplification}
If $k>f$, the reader protocol is simplified so that Lines \ref{algo1_line_1} and \ref{algo1_line_2} are omitted. Indeed, any decodable tag has at least $k \geq f+1$ coded symbols, which automatically satisfies Line \ref{algo1_line_1} in Algorithm \ref{sw_algo}, and can be safely returned because of Remark \ref{multiple_return_values}.
\end{remark}

\begin{lemma}
Consider an execution $\alpha$ of Algorithm \ref{sw_algo}. Let $\pi_1$ be a write or read operation that completes in $\alpha$, and $\pi_2$ be a write operation that completes in $\alpha$.  Let $T(\pi_1)$ denote the tag of operation $\pi_1$ and $T(\pi_2)$ denote the tag of operation $\pi_2$.  If $\pi_2$ begins after the termination of $\pi_1$, then $T(\pi_2) > T(\pi_1)$. 
	\label{lem:tagsordering_write}
\end{lemma}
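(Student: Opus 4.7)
The plan is to trace where tags originate in the system. Since the setting has a single writer whose local tag state $t$ starts at $0$ and is monotonically incremented by $1$ at Line~\ref{increment_tag_sw} at the beginning of every write invocation, any tag $t>0$ that ever appears in a \emph{put} message, or at a server, must correspond to the writer having already invoked at least $t$ distinct write operations. I would split the argument into two cases depending on whether $\pi_1$ is a write or a read.

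If $\pi_1$ is a write, both $\pi_1$ and $\pi_2$ are issued by the single writer. By well-formedness, $\pi_2$ is invoked only after $\pi_1$ returns, so at the instant $\pi_2$ is invoked the writer's local tag is exactly $T(\pi_1)$; the increment at Line~\ref{increment_tag_sw} then yields $T(\pi_2)=T(\pi_1)+1>T(\pi_1)$.

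If $\pi_1$ is a read, then $T(\pi_1)$ is the tag that it writes back (Definition~\ref{def:tag_definition}). The case $T(\pi_1)=0$ is immediate since $T(\pi_2)\ge 1$. Otherwise I would let $\pi_w$ be the (unique) write operation whose increment at Line~\ref{increment_tag_sw} first produced the tag $T(\pi_w)=T(\pi_1)$; such $\pi_w$ is well defined because write-back procedures only re-propagate already-existing tags and cannot themselves be the source of a nonzero tag. For $\pi_1$ to decode and write back tag $T(\pi_1)$, some server in its response quorum must have held that tag at the time it answered $\pi_1$'s \emph{get}, which forces $\pi_w$ to have been invoked before $\pi_1$ completes. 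Since $\pi_2$ begins after $\pi_1$ completes, $\pi_w$'s invocation strictly precedes $\pi_2$'s invocation; by single-writer well-formedness every strictly earlier write invocation has already returned, so the writer's local tag state when $\pi_2$ is invoked is at least $T(\pi_w)=T(\pi_1)$, and the increment at the start of $\pi_2$ gives $T(\pi_2)\ge T(\pi_1)+1>T(\pi_1)$.

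The main obstacle I anticipate is the bookkeeping in the read case: I have to justify that every tag present in the system is sourced at a unique write invocation of the single writer (so that $\pi_w$ is well-defined) and that the existence of $T(\pi_1)$ at a responding server forces $\pi_w$'s invocation to precede $\pi_1$'s completion in real time. Once this origin-tracing is made precise, monotonicity of the writer's local counter together with well-formedness of the single writer closes the proof immediately.
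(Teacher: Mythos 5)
Your proof takes essentially the same two-case decomposition and argument as the paper: for a write $\pi_1$, use monotonicity of the single writer's local tag counter; for a read $\pi_1$, trace the returned tag back to the write operation that originated it, observe that this write was invoked before $\pi_1$ completed (hence before $\pi_2$'s invocation), and then reduce to the write case. The only blemish is in Case 1, where you assert the writer's local tag at the moment $\pi_2$ is invoked is \emph{exactly} $T(\pi_1)$; there may be intervening writes between $\pi_1$ and $\pi_2$, so the correct claim is that the local tag is \emph{at least} $T(\pi_1)$, which still yields $T(\pi_2)\ge T(\pi_1)+1>T(\pi_1)$.
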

\begin{proof}
We first consider the case where $\pi_1$ is a write. Later, we consider the case where $\pi_1$ is a read.

	\emph{Case 1:} If $\pi_1$ is a write operation, then from the write protocol, we note that the state of the writer at any point after the completion of $\pi_1$ is at least as large as $T(\pi_1)$. Since $\pi_2$ begins after the termination of $\pi_1,$ and since $\pi_2$ increments the client state to obtain $T(\pi_2)$, we infer that $T(\pi_2)$ is strictly larger than the client state at the point of invocation of $\pi_2$, which is at least as large as $T(\pi_1)$. Therefore,  $T(\pi_2) > T(\pi_1)$.

	\emph{Case 2:} If $\pi_1$ is a read operation, note that the tag $T(\pi_1)$ of the operation corresponds to the tag sent by some server $s$ as a part of its message. From the server protocol, we note that the tag $T(\pi_1)$ was obtained by the server $s$ by a message from some write operation $\pi$, or from the default value. If from the default value, then from Line \ref{increment_tag_sw}, the result follows. Otherwise, we note that $\pi$ {begins} at the {writer} before the termination of operation $\pi_1$. Because $\pi_2$ begins after the termination of $\pi_1,$ {and because there is a single writer,} we note that $\pi_2$ begins after the termination of $\pi$ at the writer. From the argument presented in Case 1, we infer that the tag of operation $\pi_2$ is strictly larger than the tag of operation $\pi$. The tag of operation $\pi$ is equal to $T(\pi_1)$. Therefore, we have $T(\pi_2) > T(\pi_1)$.   This completes the proof.
\end{proof}

\begin{lemma}
Let $\pi_1, \pi_2$ be write operations that terminate in an execution $\alpha$ of Algorithm \ref{sw_algo}. Then, $T(\pi_1)\neq T(\pi_2)$.
\label{lem:writeTagDifferentAlg1}
\end{lemma}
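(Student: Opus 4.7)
The plan is to reduce this directly to Lemma \ref{lem:tagsordering_write} by exploiting the single-writer assumption together with the well-formedness condition. Since there is only one writer client in Algorithm \ref{sw_algo}, and since well-formedness requires that each new invocation at a client wait for the response of the preceding invocation at that same client, the two write operations $\pi_1$ and $\pi_2$ cannot overlap in time: one of them must terminate before the other is invoked.

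Without loss of generality, I would assume $\pi_1$ terminates before $\pi_2$ begins (otherwise swap the labels). Then Lemma \ref{lem:tagsordering_write} applies with this choice of $\pi_1$ and $\pi_2$ (both assumed to complete, with $\pi_2$ a write beginning after the termination of $\pi_1$), yielding $T(\pi_2) > T(\pi_1)$, and in particular $T(\pi_1) \neq T(\pi_2)$.

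There is essentially no obstacle here; the content of the lemma is an immediate corollary of Lemma \ref{lem:tagsordering_write} in the single-writer setting. The only subtle point worth pointing out is that the argument crucially uses single-writer: without it, two concurrent writes from distinct writers could both read the same state, increment it, and issue \emph{put} messages with identical tags. The proof I would write is a short paragraph consisting of the well-formedness observation followed by a one-line invocation of Lemma \ref{lem:tagsordering_write}.
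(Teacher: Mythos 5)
Your proof is correct and follows the same route as the paper: use the single-writer and well-formedness assumptions to argue that one write terminates before the other begins, then invoke Lemma~\ref{lem:tagsordering_write} to conclude the tags are strictly ordered and hence distinct.
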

\begin{proof}
Since $\pi_1$ and $\pi_2$ are invoked at the same client, there are only two possibilities: either $\pi_2$ begins after $\pi_1$ terminates, or $\pi_1$ begins after $\pi_2$ terminates. From Line \ref{increment_tag_sw} in Algorithm \ref{sw_algo}, Based on Lemma \ref{lem:tagsordering_write}, we infer that it is the case that either $T(\pi_1) > T(\pi_2)$, or $T(\pi_2) > T(\pi_1)$.
\end{proof}
The following theorem states the main result on atomicity, and the proof follows from Definitions \ref{def:partialorder} and \ref{definition_tags_single_writer}, combined with Lemmas \ref{lem:partialorder}, \ref{lem:tagsordering_read}, \ref{lem:tagsordering_write}, and \ref{lem:writeTagDifferentAlg1}. 
\begin{theorem}
Algorithm \ref{sw_algo} emulates an atomic read-write object.
  \label{thm:atomicity}
\end{theorem}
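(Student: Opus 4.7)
The plan is to invoke Lemma~\ref{lem:partialorder} with the partial order $\prec$ from Definition~\ref{def:partialorder}, using the tag assignment of Definition~\ref{definition_tags_single_writer}. First I would quickly verify that $\prec$ is genuinely a partial order on the set $\Pi$ of completed operations: irreflexivity follows because no operation is simultaneously a write and a read; antisymmetry and transitivity follow by a short case analysis on whether related pairs have strictly ordered tags or equal tags, using the fact that clause~(2) of Definition~\ref{def:partialorder} can only be invoked with the write on the left and the read on the right.

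For property~(1), suppose the response of $\pi_1$ precedes the invocation of $\pi_2$, and we want $\pi_2 \not\prec \pi_1$. If $\pi_2$ is a write, Lemma~\ref{lem:tagsordering_write} yields $T(\pi_2) > T(\pi_1)$, killing both clauses of Definition~\ref{def:partialorder} for $\pi_2 \prec \pi_1$. If $\pi_2$ is a read, Lemma~\ref{lem:tagsordering_read} gives $T(\pi_2) \geq T(\pi_1)$, so clause~(1) fails, and clause~(2) fails because $\pi_2$ is not a write. For property~(2), with $\pi_1$ a write and $\pi_2$ arbitrary, I would split on whether $\pi_2$ is a write or a read: in the write-write case, Lemma~\ref{lem:writeTagDifferentAlg1} supplies $T(\pi_1) \neq T(\pi_2)$, so clause~(1) of Definition~\ref{def:partialorder} already orders them; in the write-read case, equal tags give $\pi_1 \prec \pi_2$ via clause~(2), and strictly ordered tags give one of the two orderings via clause~(1).

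Property~(3) deserves the most care. The value a read $\pi_r$ returns is decoded from coded symbols all carrying its own tag $t := T(\pi_r)$. I would inspect the write and server protocols to argue that every tag present at a server either equals $0$ and is paired with a coded component of the default value $v_0$, or was first injected into the system by a write operation that encoded its input via $\Phi$ and sent those symbols with that tag; the write-back step in reads only replays existing (tag, value) associations. Together with Lemma~\ref{lem:writeTagDifferentAlg1}, this yields a well-defined map from each positive tag to a unique write and a unique value, so the value decoded by $\pi_r$ is the value written by the unique write $\pi_w$ with $T(\pi_w) = t$ (or $v_0$ if $t = 0$, since Line~\ref{increment_tag_sw} forces every write to have tag at least $1$). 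Clause~(2) of Definition~\ref{def:partialorder} then gives $\pi_w \prec \pi_r$, and Lemma~\ref{lem:writeTagDifferentAlg1} combined with clause~(1) rules out any write $\pi_w'$ with $\pi_w \prec \pi_w' \prec \pi_r$, so $\pi_w$ is the last preceding write.

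The main obstacle I anticipate is not the case analysis but the implicit tag-provenance argument needed for property~(3): one must confirm, by walking through both read write-backs and the server update rule, that no tag ever reaches a server without originating either from a write-protocol invocation or from the initialization, so that decoding with tag $t$ unambiguously recovers the value written by the corresponding unique write.
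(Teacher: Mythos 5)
Your proposal matches the paper's proof in structure and content: invoke Lemma~\ref{lem:partialorder} via Definition~\ref{def:partialorder}, use Lemmas~\ref{lem:tagsordering_read}, \ref{lem:tagsordering_write}, and \ref{lem:writeTagDifferentAlg1} for properties (1) and (2), and trace tag provenance (default tag $0$ vs.\ a positive tag originating from a unique write) for property (3). Your added remarks about verifying that $\prec$ is genuinely a partial order and about write-backs only replaying existing tag–value associations are small elaborations of steps the paper leaves implicit, not a different route.
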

\begin{proof}
Let $\beta$ denote a sequence of actions of the external interface of a read/write object satisfying the conditions in Lemma \ref{lem:partialorder}.  Let $\Pi$ be the set of operations in $\beta$. Note that because $\Pi$ consists of operations that complete, every operation in $\pi$ has a tag. Definition \ref{def:partialorder} imposes a partial order $\prec$ on the set $\Pi.$ Let $\pi_1$ and $\pi_2$ be two operations in $\Pi$. Let $T(\pi_1)$ denote the tag of operation $\pi_1$ and $T(\pi_2)$ denote the tag of operation $\pi_2$.  We show that operations $\pi_1$ and $\pi_2$ satisfy Properties {\bf(1)}, {\bf(2)} and {\bf{(3)}} of Lemma \ref{lem:partialorder}.

	\emph{Proof of {\bf{(1)}:}} We consider two cases. First, we consider the case where $\pi_2$ is a read. Second, we consider the case where $\pi_2$ is a write. In the first case, if $\pi_2$ is a read, then Lemma \ref{lem:tagsordering_read} implies $T(\pi_2) \geq T(\pi_1)$. As per Definition \ref{def:partialorder}, we infer that it is not the case that $\pi_2 \prec \pi_1$. In the second case, if $\pi_2$ is a write, then Lemma \ref{lem:tagsordering_write} implies that $T(\pi_2) > T(\pi_1)$. As per Definition \ref{def:partialorder}, we infer that it is not the case that $\pi_2 \prec \pi_1$.

	\emph{Proof of {\bf{(2)}:}} Recall that $\pi_1$ is a write operation. First consider the case where $\pi_2$ is a read operation. There are only two possibilities: either $T(\pi_1) > T(\pi_2)$, then $\pi_2 \prec \pi_1$. Otherwise, $\pi_1 \prec \pi_2$. Now consider the case where $\pi_2$ is write operation. From lemma \ref{lem:writeTagDifferentAlg1}, either $T(\pi_1) > T(\pi_2),$ which implies that $\pi_2 \prec \pi_1$, otherwise, $T(\pi_2) > T(\pi_1)$, which implies that $\pi_1 \prec \pi_2$. This completes the proof of {\bf{(2)}}.

\emph{Proof of {\bf{(3)}:}} Consider a read operation $\pi_1$ that returns value $v$. Let $T(\pi_1)$ denote the tag of the read operation. 

We first consider the case of $T(\pi_1)=0$. By Lemma \ref{lem:tagsordering_read} and since the tag of a write operation is greater or equal to 1 by Line \ref{increment_tag_sw}, there are no writes preceding $\pi_1$. By the read protocol, the read receives at least $k$ responses with codeword symbols obtained by applying the encoding function $\Phi$ on the default value $v_0$, decodes and returns $v_0$.

We now consider the case of  $T(\pi_1)>0$. Then, from the server protocol, we note that a write operation $\pi_2$ encoded a value $w$ with codeword symbols corresponding to tag $T(\pi_1).$ From our definition of the tag of an operation, we note that the tag of the operation $\pi_2$ is equal to $T(\pi_1)$. From our definition of partial order, we note that $\pi_2$ is the last write operation that precedes $\pi_1$ as per $\prec$. To complete the proof, we need to show that $\pi_1$ returns $w$. From the read protocol, we note that the read operation receives at least $k$ messages from $k$ distinct servers with tag $T(\pi_1),$ and the corresponding codeword symbols. From the write and server protocols, we infer that these $k$ codeword symbols were obtained by applying the $(N,k)$ code to $w$. Therefore, the reader decodes $w$ and returns. This completes the proof.
\end{proof}

Next, we use Lemma \ref{lem:a_safety_property} to show a safety property that will be used later to prove liveness. 
\begin{lemma} \label{lem2}
	Consider any execution $\alpha$ of Algorithm \ref{sw_algo}. {Let $\pi_r$ denote a read operation in $\alpha$ that receives a quorum $Q_r$ of responses to its \emph{get} message.} Let $\mathcal{S}$ denote the set of all writes that terminate before the invocation of $\pi_r$ in $\alpha$. {If $\mathcal{S}$ is non-empty,} let $t_w$ denote the largest among the tags of the operations in $\mathcal{S}$. {If $\mathcal{S}$ is empty, let $t_w=0$}. 

If the number of writes concurrent with the read $\pi_r$ is smaller than $\nu$, then there is some tag ${t}$ such that  
\\ (1) ${t} \geq t_w$, 
\\ (2) $\pi_r$ receives at least $k$ responses to its \emph{get} message with tag ${t}$, and
\\ (3) the number of tags that are higher than $t$ is smaller than $\nu$.
\label{lem:a_secondsafety_property}
\end{lemma}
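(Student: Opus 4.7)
The plan is to reduce to Lemma \ref{lem:a_safety_property} by taking $\pi_1$ to be the latest write that completes before $\pi_r$'s invocation (so $T(\pi_1) = t_w$) whenever $\mathcal{S} \neq \emptyset$, with $Q_1$ the quorum that responded to its \emph{put} message. Part (1) of Lemma \ref{lem:a_safety_property} then guarantees that every server in $Q_1 \cap Q_r$ responds to $\pi_r$'s \emph{get} with a tag $\geq t_w$. The heart of the proof is to verify the hypothesis of Lemma \ref{lem:a_safety_property}(2), namely that at most $\nu$ distinct tags appear among the responses from $Q_1 \cap Q_r$; once this is in place, Lemma \ref{lem:a_safety_property}(2) immediately hands over a tag $t \geq t_w$ with at least $k$ responses from $Q_1 \cap Q_r \subseteq Q_r$, which furnishes (1) and (2).

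The main combinatorial step is to show that every tag $t' > t_w$ appearing in any response received by $\pi_r$ must come from a write concurrent with $\pi_r$. I would argue this by chasing causality: such a $t'$ is stored at a responding server $s$ by the \emph{put} of some write $\pi'$ with $T(\pi') = t'$; because $t' > t_w$, we have $\pi' \notin \mathcal{S}$, so $\pi'$ does not terminate before $\pi_r$ is invoked; on the other hand, the \emph{put} of $\pi'$ must arrive at $s$ before $s$ replies to $\pi_r$'s \emph{get}, so $\pi'$ is invoked before $\pi_r$ terminates. Hence $\pi'$ is concurrent with $\pi_r$. By Lemma \ref{lem:writeTagDifferentAlg1} distinct writes have distinct tags, so the concurrency hypothesis yields at most $\nu - 1$ distinct tags $> t_w$ among $\pi_r$'s responses, and consequently at most $\nu$ distinct tags in the responses from $Q_1 \cap Q_r$ (all of which are $\geq t_w$).

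Property (3) then drops out: any tag in $\pi_r$'s responses that is strictly greater than $t$ is also greater than $t_w$ (since $t \geq t_w$), so such tags number at most $\nu - 1 < \nu$. The case $\mathcal{S} = \emptyset$ (so $t_w = 0$) is handled by the same causality argument applied directly to $Q_r$: every nonzero tag in $\pi_r$'s responses comes from a concurrent write, so $Q_r$ exposes at most $\nu$ distinct tags, and a Pigeonhole argument on $|Q_r| \geq N - f \geq N - 2f$ produces one with at least $\lceil (N - 2f)/\nu \rceil = k$ responses, which is automatically $\geq 0 = t_w$. The main obstacle is the causality argument linking tags $> t_w$ to writes concurrent with $\pi_r$; everything else is routine bookkeeping on top of Lemma \ref{lem:a_safety_property} and the Pigeonhole principle.
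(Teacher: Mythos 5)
Your proof is correct and takes essentially the same route as the paper's: reduce to Lemma~\ref{lem:a_safety_property} via the write $\pi_w$ with tag $t_w$ and its quorum, argue that every tag exceeding $t_w$ among $\pi_r$'s responses comes from a write concurrent with $\pi_r$ (so at most $\nu$ distinct tags appear in the responses from $Q_1 \cap Q_r$), and let the pigeonhole conclusion of Lemma~\ref{lem:a_safety_property}(2) deliver the tag $t$, with (3) following since $t \geq t_w$. Your causality argument spelling out why a tag $t' > t_w$ forces concurrency is more explicit than the paper's one-line assertion; the only slight imprecision is that the \emph{put} delivering $t'$ to server $s$ might come from a reader's \emph{write\_back} rather than from $\pi'$ directly, but the causal chain still originates at $\pi'$'s invocation, so the conclusion stands unchanged.
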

\begin{proof}
We first argue that, among the responses to the read's \emph{get} message from $Q_r$, the reader gets fewer than $\nu$ distinct response tags that are larger than $t_w$. By definition of the tag $t_w$, if the read receives a tag $t$ that is larger than $t_w$, then the tag $t$ corresponds to the tag of a write operation $\pi$ that is concurrent with $\pi_r$. Since the number of writes that are concurrent with the read is smaller than $\nu$, the read receives fewer than $\nu$ distinct tags that are larger than $t_w$. 
	
We assume that $\mathcal{S}\neq \emptyset$. The case $\mathcal{S}= \emptyset$ can be treated in a similar way.
Consider the write operation $\pi_w$ in $\mathcal{S}$ whose tag is $t_w$. Let $Q_w$ denote the quorum of servers from which responses were received by the writer to the \emph{put} message of $\pi_w$. Property (1) in Lemma \ref{lem:a_safety_property} implies that every server $s$ in $Q_w \cap Q_r$ responds with a tag that is at least as large as $t_w$. Note that we have already shown that $\pi_r$ receives fewer than $\nu$ distinct tags to its \emph{get} message that are larger than $t_w$. That is, among the responses received by $\pi_r$ from the servers in $Q_w \cap Q_r$ to its \emph{get} message, there are at most $\nu-1$ distinct tags that are larger than $t_w$. Since some of the servers in $Q_w \cap Q_r$ may respond with tag $t_w$, we infer that among the responses received by $\pi_r$ from servers in $Q_w \cap Q_r$, there are at most $\nu$ distinct tags. Property (2) of Lemma \ref{lem:a_safety_property} implies the statement of the lemma, since it implies that there is at least one tag ${t}$ which is no smaller than $t_w$ such that at least $k$ responses with tag ${t}$ are received by $\pi_r$. It can be seen that (3) holds.
\end{proof}
\subsection{Liveness Properties}
\label{sec:liveness}
We state the liveness of Algorithm 1.  
Recall that we focus here on the single-writer algorithm. 
\begin{theorem}[Termination of writes]
Consider any fair execution $\alpha$ of Algorithm \ref{sw_algo} where the number of server failures is at most $f$, and the write client does not fail. Then, every write operation terminates in $\alpha$.
\label{thm:writecompletes}
\end{theorem}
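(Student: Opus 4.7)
The plan is to unroll the write protocol step by step, check that every blocking action eventually unblocks under the stated hypotheses, and conclude from the definition of a quorum. Since the writer does not fail and the model guarantees well-formedness, the write operation is actually invoked and executes its three lines sequentially; only the final line (awaiting a quorum of acknowledgements) can potentially block forever, so this is the only point that needs to be analyzed.

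First I would observe that Lines \ref{increment_tag_sw}–2 of the write protocol involve purely local computation (incrementing $t$ and applying the encoding function $\Phi$), so they complete in finitely many steps in any fair execution. In the third line, the writer sends $put(t, y_s)$ to every server $s \in \{1,\dots,N\}$. Under the assumption that at most $f$ servers fail, at least $N-f$ servers remain non-failed throughout $\alpha$. By the reliability of the point-to-point channels together with the fairness of $\alpha$, each such non-failed server eventually receives the $put$ message.

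Next I would invoke the server protocol: on receipt of any $put(t_{new}, y_{new})$, the server responds with an acknowledgement (irrespective of whether it updates its local state). Hence each of the at least $N-f$ non-failed servers eventually sends an acknowledgement back to the writer, and by reliability and fairness these acknowledgements are eventually delivered to and processed by the writer. The set of servers whose acknowledgements are received therefore contains a set of size at least $N-f$, which by definition is a quorum.

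Since the writer's await condition is to receive acknowledgements from a quorum, that condition is eventually satisfied, and the write operation terminates. The only subtle point in the argument is making sure that the writer is actually in the middle of executing the protocol when the acknowledgements arrive, which follows from well-formedness (the previous operation, if any, already terminated by the same inductive argument) and from the assumption that the write client does not fail. No obstacle of substance arises; the statement is essentially an immediate consequence of the quorum size $N-f$ matching the maximum number of surviving servers under $f$ failures.
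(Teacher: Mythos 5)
Your proof is correct and follows essentially the same route as the paper's: non-failed servers eventually receive the \emph{put} messages and always acknowledge them, the at most $f$ failures leave at least $N-f$ live servers forming a quorum, so the await condition is eventually satisfied. The extra remarks about the local steps and about well-formedness ensuring the writer is actually mid-protocol are harmless elaborations of the same argument.
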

\begin{proof}
Consider a fair execution $\alpha$ and let $\pi$ denote an arbitrary write operation in $\alpha$. Consider a non-failing server $s$ in $\alpha$. In a fair execution, eventually server $s$ receives the \emph{put} message of operation $\pi$. From the server protocol, we note that server $s$ responds to the \emph{put} message of the write with an acknowledgement. Therefore, in a fair execution, eventually operation $\pi$ receives an acknowledgement from every non-failing server $s$. Since the number of server failures is no bigger than $f$, there is at least one quorum $Q$ consisting entirely of non-failing servers. Therefore, the write operation $\pi$ receives acknowledgments from at least one quorum of servers. From the write protocol, we infer that operation $\pi$ terminates. This completes the proof.
\end{proof}
\begin{theorem}[Termination of reads]
Consider any fair execution $\alpha$ of Algorithm \ref{sw_algo} where the number of server failures is at most $f$. Consider any read operation that is invoked at a non-failing client in $\alpha$. If the number of writes concurrent with the read is strictly smaller than $\nu$, and the read client does not fail, then the read operation completes in $\alpha$.
\label{thm:readcompletes}
\end{theorem}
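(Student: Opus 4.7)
The plan is to show the read operation completes by verifying that each of the three potential blocking points—receiving a quorum of \emph{get} responses, finding a non-empty decodable set $T$, and receiving a quorum of \emph{put} acknowledgements in the \emph{write\_back} procedure—does not stall in a fair execution under the stated assumptions.

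First I would handle the \emph{get} phase. Since at most $f$ servers fail, at least $N-f$ non-failing servers exist, and each of them, upon receiving the \emph{get} message in a fair execution, eventually responds with its stored $(t,y)$ pair per the server protocol. Therefore the reader eventually collects responses from at least one quorum $Q_r$ of size $\geq N-f$, so the \emph{get} phase terminates. The same argument, applied to the \emph{put} messages, handles the \emph{write\_back} phase: the non-failing servers all eventually acknowledge, so a quorum of acknowledgements is received and \emph{write\_back} terminates.

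The main obstacle—the only step that uses the concurrency hypothesis—is ruling out the \_\emph{abort}\_ branch, i.e., showing $T\neq\emptyset$ once $Q_r$ has been assembled. Here I would apply Lemma \ref{lem:a_secondsafety_property} directly to $\pi_r$ and $Q_r$. Since the number of writes concurrent with $\pi_r$ is strictly less than $\nu$, the lemma yields a tag $t$ such that (a) at least $k$ servers in $Q_r$ respond with tag $t$, and (b) strictly fewer than $\nu$ distinct tags in the responses exceed $t$. Property (a) means $\pi_r$ has received $k$ coded symbols produced by $\Phi$ applied to the value with tag $t$, which is enough to decode by the MDS property of the $(N,k)$ code, so $t$ is decodable. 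Property (b) implies the number of response tags strictly larger than $t$ is at most $\nu$, which is exactly the condition in Line \ref{algo1_line_2} of Algorithm \ref{sw_algo}. Hence $t\in T$, so $T\neq\emptyset$, and the reader proceeds to invoke \emph{write\_back}.

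Combining the three steps: the reader completes the \emph{get} phase (fairness plus at most $f$ server failures), then enters the \emph{write\_back} procedure rather than \_\emph{abort}\_ (Lemma \ref{lem:a_secondsafety_property} together with the concurrency bound), and finally completes \emph{write\_back} and returns $v$ (fairness plus at most $f$ server failures, as in Theorem \ref{thm:writecompletes}). Since the read client is assumed not to fail, no step is interrupted and the entire read operation terminates in $\alpha$, as required.
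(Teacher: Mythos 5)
Your proof is correct and follows essentially the same route as the paper's: use fairness and the bound of $f$ server failures to terminate the \emph{get} and \emph{write\_back} phases, then invoke Lemma~\ref{lem:a_secondsafety_property} to rule out the $\_$\emph{abort}$\_$ branch. Your argument is in fact marginally tighter than the paper's in showing $T\neq\emptyset$: you read off the Line~\ref{algo1_line_2} condition directly from part~(3) of the lemma, whereas the paper introduces an unnecessary case split between Lines~\ref{algo1_line_1} and~\ref{algo1_line_2}.
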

\begin{proof}
Consider a read operation $\pi_r$ in $\alpha$ such that the number of writes that are concurrent with $\pi_r$ is smaller than $\nu$. We show that $\pi_r$ completes in $\alpha$. Since the number of server failures in $\alpha$ is at most $f$, we note that $\pi_r$ receives responses from a quorum $Q_r$ of servers to its \emph{get} message.	To show completion of $\pi_r$, we show that

 \noindent(1) there is a tag that is decodable, and \\
  \noindent (2) it satisfies either Line \ref{algo1_line_1} or Line \ref{algo1_line_2}, and the \emph{write}$\_$\emph{back} phase terminates, that is, \\
 \noindent  (3) the read receives a quorum of acknowledgments to its \emph{put} message.

{Since $\alpha$ is an execution where there are at most $f$ failures, there is at least one quorum consisting entirely of non-failing servers. Since every server eventually responds to the \emph{get} message of the read, the read gets responses from some quorum of servers $Q_r$.}  {Lemma \ref{lem:a_secondsafety_property} implies that there is a tag $\overline{t}$ such that the read $\pi_r$ receives at least $k$ responses with tag $\overline{t}$. In other words, Lemma \ref{lem:a_secondsafety_property} implies (1).}
  
We now show (2). We distinguish two cases. If Line \ref{algo1_line_1} is satisfied, then (2) follows. Otherwise, by virtue of Lemma \ref{lem:a_secondsafety_property} (iii), Line \ref{algo1_line_2} is satisfied and (2) follows.

  (3) follows from the fact that the number of server failures in $\alpha$ is at most $f$, and therefore there is at least one quorum of servers that eventually responds to the \emph{put} message from the read $\pi_r.$ We have thus shown (1), (2) and (3), which imply that the read operation $\pi_r$ terminates.
\end{proof}
\begin{remark}
\label{sufficient_liveness_condition}
The condition of concurrency being smaller than $\nu$ in Theorem \ref{thm:readcompletes} is a sufficient condition for the termination of a read operation, but it is not necessary. Indeed, as highlighted in Remark
\ref{multiple_return_values} and Remark \ref{system_simplification}, 
the reader may safely return a value $v$ with tag $t$ when the number of tags strictly higher than $t$, denoted by $u$, satisfies $u \geq \nu$.
\end{remark}

In practice, if a read operation aborts, the reader in Algorithm \ref{sw_algo} can repeatedly invoke new read operations until it can decode a value that it can safely return. 
Due to asynchrony, a read may sample symbols from different writes at different times, and consequently, a read may not be able to see $k$ matching pieces of any single new value for indefinitely long, as long as new values continue to be written concurrently with the read. Therefore, we require reads to return in executions where a finite number of writes are invoked, thus only guaranteeing finite-write (FW) termination.

In Figure \ref{FW_Alg_1_variant}, we describe the read protocol for FW termination. We define a read \emph{iteration} to be an execution of Lines \ref{start_iteration} through \ref{end_iteration} in Figure \ref{FW_Alg_1_variant}. Different from the read protocol in Algorithm \ref{sw_algo}, a read invoking multiple iterations can combine responses from different iterations to form a quorum of responses that would allow it to reconstruct a value it can return. This is reflected in Line \ref{form_quorum}. Moreover, a read is allowed to return any value with tag that is at least as large as the highest tag it observed in the first iteration (Lines \ref{gamma_def} and \ref{algo1_line_13}). 
The \emph{write}$\_$\emph{back} procedure in Line \ref{same_write_back} of Figure \ref{FW_Alg_1_variant} is the same as in Algorithm \ref{sw_algo}.
\begin{figure}
 \noindent \rule{\textwidth}{0.2pt}
\begin{algorithmic}[1] 
\Statex \textbf{Read protocol}
\Statex  Let $L$ and $T$ be empty sets, $once = false, \Gamma = 0$.
\Repeat  
\label{start_iteration}
\State   Send query request $get$ to all servers, await pairs $(t,element)$ from a quorum.  
\State   Let $R$ be the set of response pairs. $L \leftarrow L \cup R$.
\If { 	  $once = false$ } \label{update_gamma_condition}
\State Let $\Gamma$ be the maximum tag in $R$;   $once \leftarrow true$.
\label{gamma_def}
\EndIf
\State Let $ \mathcal{R} =\{  R_1 ,\ldots,  R_{|\mathcal{R}|} \}$ such that, for each $i$, $ R_i  \subset L, | R_i | \geq N-f $ and $R_i $ does not contain  
\Statex \quad \quad more than a single response from the same server.
\label{form_quorum}
\State Let i=0.
  \While { ($T = \emptyset$  and  $i \neq | \mathcal{R} |$)} \label{while_loop_read_protocol}
  \State $i \leftarrow i+1$. Consider $R_i $.
  \State  Let $T$ be the set of decodable tags $t$ occurring in $R_i$ such that   
\State  $(i)$ $t$ appears in at least $f+1$ responses,  \label{algo1_line_11}
\State $(ii)$ Or, the number of tags strictly higher than $t$ is at most $\nu$,\label{algo1_line_12}
\State $(iii)$ Or, $t \geq \Gamma$.\label{algo1_line_13}
  \EndWhile 
 \Until {$T \neq \emptyset$}  \label{end_iteration}
  \State  Let $t = \max(T)$, and $v$ its value. 
   \State  \emph{write}$\_$\emph{back}$(t,v)$. \label{same_write_back}
\end{algorithmic}
 \noindent \rule{\textwidth}{0.2pt}
 \caption{\textbf{Modified read protocol for FW termination.} }
 \label{FW_Alg_1_variant}
\end{figure}
\begin{lemma}[Finite-write termination]
\label{FW_termination_SW_algo}
Algorithm $1$ with re-invoked reads as in Figure \ref{FW_Alg_1_variant} guarantees atomicity and FW termination. 
\end{lemma}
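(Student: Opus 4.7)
The plan is to prove atomicity and finite-write (FW) termination separately, exploiting the fact that only the read protocol changes: the write protocol, the server protocol, and the write\_back procedure are identical to those of Algorithm \ref{sw_algo}. The tag of every completed operation is therefore still well-defined by Definition \ref{def:tag_definition}, and both Lemma \ref{lem:tagsordering_write} and Lemma \ref{lem:writeTagDifferentAlg1} carry over verbatim since their proofs reference only the write and server protocols. For atomicity it then suffices to re-establish the analog of Lemma \ref{lem:tagsordering_read}: if a read $\pi_2$ is invoked after an operation $\pi_1$ completes, then $T(\pi_2) \geq T(\pi_1)$. Once that is in hand, Definition \ref{def:partialorder} and Lemma \ref{lem:partialorder} yield atomicity exactly as in Theorem \ref{thm:atomicity}.

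To re-prove this analog, fix $\pi_1$'s write or write\_back quorum $Q_1$, let $R$ be the response set collected during $\pi_2$'s first iteration, and let $R_i$ be the virtual quorum of size $\geq N-f$ (with distinct server sources) from which the returned tag $t = \max(T)$ is decoded. Lemma \ref{lem:a_safety_property}(1) applied to the genuine single-iteration quorum $R$ gives $\Gamma \geq T(\pi_1)$. I then case-split on which clause places $t$ into $T$. Case (iii) is immediate: $t \geq \Gamma \geq T(\pi_1)$. For cases (i) and (ii), assume for contradiction $t < T(\pi_1)$; since every response in $R_i$ from a server in $Q_1$ was generated after $\pi_1$ completed, monotonicity of server tags forces all such responses to carry a tag $\geq T(\pi_1) > t$. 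In case (i), the $\geq f+1$ responses with tag $t$ must then originate from at most $f$ servers outside $Q_1$, a contradiction. In case (ii), a pigeonhole argument identical in structure to Lemma \ref{lem:a_safety_property}(2) applied to $R_i$ produces a decodable tag $t' \geq T(\pi_1) > t$ with at most $\nu - 1$ tags strictly exceeding it, so $t'$ satisfies clause (ii) and $t' \in T$, contradicting $t = \max(T)$.

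For FW termination, consider any execution with finitely many write invocations, at most $f$ server failures, and a read $\pi_r$ at a non-failing client. Reliability of channels guarantees that every put message ever sent is eventually delivered, so there is a point after which no new writes are invoked and every non-failed server stores the largest delivered tag $t^{*}$. Any iteration of $\pi_r$ whose get is processed after that point returns $\geq N-f \geq f+1$ responses all carrying tag $t^{*}$, so clause (i) is satisfied with $t^{*}$ decodable, $T$ becomes non-empty, and $\pi_r$ exits the outer loop; the ensuing write\_back then terminates by the argument of Theorem \ref{thm:writecompletes}. The main obstacle I anticipate is the case-split on clause (ii) above: the set $R_i$ is a virtual quorum assembled from responses collected at different iterations and hence at different times, so Lemma \ref{lem:a_safety_property} does not apply off the shelf. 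The argument hinges on the observation that monotonicity of server tags, together with the purely set-theoretic bound $|Q_1 \cap S_i| \geq N-2f$ (where $S_i$ is the set of servers whose responses populate $R_i$), is enough to replay the pigeonhole step regardless of when individual responses in $R_i$ were produced.
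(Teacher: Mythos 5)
Your atomicity argument is sound and is in fact a more careful version of the paper's. The paper handles the non-clause-(iii) case by asserting that the multi-iteration read is ``equivalent'' to a single-iteration read whose quorum is $Q$, which is a bit hand-wavy because the responses in $R_i$ were produced at different points in time. You make this rigorous by observing that the only facts used in the pigeonhole step of Lemma \ref{lem:a_safety_property}(2) are tag monotonicity at servers and the cardinality bound $|Q_1 \cap S_i| \geq N-2f$, both of which are time-independent; your case-split on clauses (i)/(ii)/(iii) then proceeds correctly.

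Your finite-write termination argument, however, has a genuine gap. You assert that once all put messages are delivered, ``every non-failed server stores the largest delivered tag $t^{*}$,'' and conclude that a later iteration sees $\geq N-f$ responses all with tag $t^{*}$, so clause (i) fires. This is false if the last write failed mid-execution: a crashed writer may have sent its $put(t^{*}, \cdot)$ messages to only a strict subset of the servers (crashes can interrupt the per-server sends), so after stabilization the non-failed servers may hold two different tags -- $t^{*}$ on the servers reached by the failed write and the previous completed write's tag on the rest. A later iteration therefore need not return $\geq f+1$ responses with a single common tag, and clause (i) need not be satisfied for $t^{*}$ (indeed $t^{*}$ may not even be decodable if fewer than $k$ servers hold it). The paper instead picks a point $P$ after which all writes have completed or crashed, notes that the remaining concurrency (failed writes only, at most one for a single writer) is at most $\nu-1<\nu$ since $\nu\geq 2$, and invokes Theorem \ref{thm:readcompletes} on the fresh iteration started after $P$. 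Your argument can be repaired along similar lines -- once servers stop changing state, a later iteration sees an instantaneous image, and by Lemma \ref{data_persistence_sw} and Remark \ref{instantaneous_image} some decodable tag satisfies clause (ii) -- but as written the appeal to clause (i) with a globally agreed $t^{*}$ does not go through.
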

\begin{proof}
Let $t_1$ denote the tag of all preceding writes and reads and $Q_1$ denote the quorum of servers that replied to the \emph{put} message of the latest finished operation. Let $R$ denote the quorum of servers that replied to the read in its first iteration. As $R \cap Q_1 \neq \emptyset$, it follows that $\Gamma \geq t_1$. That is, the tag $\Gamma$, selected in Line \ref{gamma_def}, is higher than or equal to the tag of all preceding writes and reads. Thus, a value returned by a reader with tag satisfying Line \ref{algo1_line_13} in Figure \ref{FW_Alg_1_variant} satisfies the statement of Lemma \ref{lem:tagsordering_read}. Note that $\Gamma$ is updated only once (Line \ref{update_gamma_condition}). Otherwise, suppose that Line \ref{algo1_line_13} is not satisfied, which means that the tag of the returned value is smaller than $\Gamma$. By the read protocol, there exists a quorum $Q $ from which the read obtained its value $v$. The value $v$ satisfies either Line \ref{algo1_line_11} or Line \ref{algo1_line_12}. The read operation can be seen to be \textit{equivalent} to another read operation which terminated in a single iteration and received responses from $Q$. Thus, Lemma \ref{lem:tagsordering_read} is satisfied in this case. Moreover, Lemmas \ref{lem:tagsordering_write} and \ref{lem:writeTagDifferentAlg1} hold. Thus, Algorithm \ref{sw_algo} with re-invoked reads as in Figure \ref{FW_Alg_1_variant} satisfies atomicity, with a similar proof to Theorem \ref{thm:atomicity}. 

Consider a fair execution with finite number of writes. 
Since non-failed writes terminate, let $P$ be a point that all writes either completed or failed. There can be at most $\nu-1$ failed writes (in fact since there is a single writer, there is at most 1 failed write/concurrent write). 
Suppose there is a read that does not complete by point $P$.
Because the execution is fair and at most $f$ servers fail, the read can take steps and restart an iteration. Hence the read is re-invoked after point $P$. There is at most $\nu -1$ concurrent writes (corresponding to failed writes) with the read, hence the read terminates by Theorem \ref{thm:readcompletes} and the fact that $\nu \ge 2$.
\end{proof}
\subsection{Storage and Communication Costs of Algorithm \ref{sw_algo}}
\label{cost_SW_algo}
By the design of Algorithm \ref{sw_algo}, the following result follows.
\begin{theorem}
The storage cost of Algorithm \ref{sw_algo} is 
$
\frac{N}{k} = \frac{N}{\lceil\frac{N-2f}{\nu}\rceil}
$ units.
The write communication cost  is $\frac{N}{k}$. The worst-case read communication cost is $\frac{2N}{k}$, including the {\emph{write$\_$back}} phase. 
\end{theorem}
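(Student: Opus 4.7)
The plan is to prove each of the three claims by a direct counting argument based on the server state and the messages actually sent in the protocols of Algorithm \ref{sw_algo}; there is no real subtlety, so the proof will mostly consist of reading off the sizes from the algorithm description.

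First, for the storage cost I would invoke the server-side state variable in Algorithm \ref{sw_algo}, which stores a single pair $(t,y)$ with $y \in \mathcal{W}$. By the choice of the $(N,k)$ MDS code $\Phi$, each coded symbol has size $\log_2 |\mathcal{W}| = \tfrac{1}{k} \log_2 |\mathcal{V}|$, i.e., $\tfrac{1}{k}$ units. Since metadata (in particular the tag $t$) is neglected per the cost definitions in Section \ref{sec:preliminaries}, each server contributes exactly $\tfrac{1}{k}$ units, and summing over the $N$ servers gives the overall storage cost of $\tfrac{N}{k}$ units. Substituting $k = \lceil (N-2f)/\nu \rceil$ gives the stated expression.

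Next, for the write communication cost, I would observe that the write protocol executes $(y_1,\ldots,y_N) = \Phi(v)$ and then sends a single $put(t,y_s)$ message to each server $s \in \{1,\ldots,N\}$. Each $y_s$ has size $\tfrac{1}{k}$ units, acknowledgments do not carry data, and once a quorum acknowledges the write terminates. Thus the total number of data bits transmitted is $N \cdot \tfrac{1}{k} = \tfrac{N}{k}$ units.

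For the read, I would bound the two phases separately. In the \emph{get} phase, each server responds with its stored pair $(t,y)$, contributing $\tfrac{1}{k}$ data units; in the worst case all $N$ servers respond, giving at most $\tfrac{N}{k}$ units. The \emph{write}$\_$\emph{back} phase invokes the same subprotocol as the write (encoding the decoded $v$ and sending $put(t,y_s)$ to all $N$ servers), contributing another $\tfrac{N}{k}$ units. Adding the two phases gives the worst-case read communication cost of $\tfrac{2N}{k}$ units.

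The only mild obstacle is justifying that the worst case really is $N$ responses in each direction rather than $N-f$: this follows because, although termination requires only a quorum of $N-f$ responses/acknowledgments, late responses from the remaining up to $f$ non-failed servers may still be delivered and counted toward the communication cost, and the definition in Section \ref{sec:preliminaries} takes the largest total over all executions. Given the simplicity of the accounting, I do not expect any other difficulty.
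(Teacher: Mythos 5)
Your proposal is correct and takes essentially the same approach as the paper, which simply asserts the theorem ``by the design of Algorithm~\ref{sw_algo}'' without a formal proof: you read off the sizes of the stored coded symbols and the transmitted \emph{put}/response messages directly from the server state and the protocol, arriving at the same figures. Your closing remark justifying $N$ rather than $N-f$ messages in the worst case is a reasonable and correct unpacking of the ``largest total over all executions'' clause in the cost definition.
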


Comparing the storage cost of Algorithm \ref{sw_algo} and ABD algorithm (we assume that ABD uses only $2f+1$ servers and the remaining servers are unused, to obtain the minimal storage cost $2f+1$), we have 
\begin{equation} \label{eq:storage_comparison_with_ABD}
2f+1 - \frac{N}{k} = \frac{(N - 2f - 1) (2f+1 -\nu)}{N - 2f + \nu - 1} = \frac{ k-1}{ k}(2f+1 -\nu).
\end{equation} 
Therefore, when $\nu \le  2f+1$, the storage of Algorithm \ref{sw_algo} is at most that of ABD algorithm using $2 f+ 1$ servers. 
Compared to ABD, Algorithm \ref{sw_algo} makes use of more available servers to offer storage reduction, at the expense of not guaranteeing liveness if a read operation is concurrent with at least $\nu$ write operations.

\begin{remark}
A reader does not need to write back a value $v$ with tag $t$ if the reader has observed a tag higher than $t$ among the responses. 
By the well-formedness assumption, the write client does not start a new write until it completes the previous one.
\end{remark}
\section{Multi-Writer Multi-Reader Algorithm}
\label{multi_writer_algorithm_1}
\subsection{Algorithm Description}
Assume in this section that $\nu < N-2f, \nu \geq 1$ and thus $k>1$. We extend Algorithm \ref{sw_algo} to the multi-writer setting, in which we assume the presence of an arbitrary number of writers, as well as an arbitrary number of readers. The proposed algorithm, referred to as Algorithm \ref{Algorithm_mw_1}, combines replication and multi-version codes while achieving consistency and low storage costs.
Algorithm  \ref{Algorithm_mw_1} contains a description of the protocol. Here, we provide an overview of the algorithm.

Each server maintains a $(tag, element )$ tuple, where \emph{element} can be a full replica or a coded symbol. We assume that tags are tuples of the form $(z, ‘id’)$, where $z$ is an integer and $‘id’$ is an identifier of a write client. The ordering on the set of tags $\mathcal{T}$ is defined lexicographically, using the usual ordering on the integers and a predefined ordering on the client identifiers. 
In the write protocol, the \emph{query} phase first obtains the tags of the servers, in order to generate a higher tag. 
In the \emph{pre-write} phase, a writer propagates a full replica to at least $k + f$ servers, to ensure that the consistency of the data is not compromised in the presence of concurrent
writers.
In the \emph{finalize} phase,  coded symbols are sent to a quorum of servers. The servers only maintain the highest tag, and the corresponding element.
As $k=\lceil \frac{N-2f}{\nu} \rceil$, it follows that $N-f\geq f+k + (\nu -1) k \geq f+k$.
Hence, the \emph{pre-write} quorum is smaller than the \emph{finalize} quorum.
The read protocol of Algorithm \ref{Algorithm_mw_1} is essentially similar to Algorithm \ref{sw_algo}, except that a reader can receive coded symbols and/or replicas. In particular, a decodable tag may come from a replica and/or $k$ matching coded symbols. 
The \emph{write}$\_$\emph{back} procedure in the read contains two phases: \emph{pre-write} and \emph{finalize} that are the same as the write protocol.
\begin{algorithm}
\caption{\textbf{: multi-writer multi-reader setting} }
\begin{algorithmic}[1] 
\Statex  \textbf{Write protocol} \newline
 \emph{Input:} Value $v$, {$v \in \mathcal{V}$}.
\Statex \underline{\emph{query}  }
\State Send \emph{get$\_$tag} messages to all servers asking for their stored tags; await responses from a quorum.
\Statex \underline{\emph{pre-write}  }
\State Select the largest tag from the query phase; let its integer component be $z$. Form a new tag $t$ as $(z+1, ‘id’)$, where $‘id’$ is the identifier of the client performing the operation. Send $put(t,v)$ to server $s,$ for every $s \in \{1,2,\ldots,k+2f\}$.
\State Await acknowledgement from $k+f$ servers.

\Statex \underline{\emph{finalize} }
\State Use the $(N,k)$ code to get $N$ coded symbols. Denote $(y_1, y_2,\ldots, y_N) = \Phi(v)$.
 \State     Send $put(t,y_s)$ to server $s,$ for every $s \in \{1,2,\ldots,N\}$. Await acknowledgement from a quorum, and then terminate.
\newline  
\Statex \textbf{Read protocol}
\State	Send query request $get$ to all servers, await pairs $(t,element)$ from a quorum.  
\State	Let $R$ be the set of response pairs.
\State	Let $T$ be the set of decodable tags $t$ occurring in $R$ such that   
\State  $(i)$ $t$ appears in at least $f+1$ responses.
\label{algo2_line_1} 
\State $(ii)$ Or, the number of tags strictly higher than $t$ is at most $\nu$.
     \If{ $T \neq \emptyset$} 
     \State  Let $t = \max(T)$, and $v$ its value. 
     		 \State  \emph{write}$\_$\emph{back}$(t,v)$    		 
     \Else
     \State $\_$\emph{abort}$\_$
     \EndIf   
 \Procedure{ \emph{write}$\_$\emph{back} }{$ t,v $}
 \Statex \underline{\emph{pre-write}  }
\State Send $put(t,v)$ to server $s,$ for every $s \in \{1,2,\ldots,k+2f\}$. 
\State Await acknowledgement from $k+f$ servers.
\Statex \underline{\emph{finalize} }
\State Use the $(N,k)$ code to get $N$ coded symbols. Denote $(y_1, y_2,\ldots, y_N) = \Phi(v)$.
 \State     Send $put(t,y_s)$ to server $s,$ for every $s \in \{1,2,\ldots,N\}$. Await acknowledgement from a quorum, and then terminate.
\EndProcedure
\newline
\Statex \textbf{Server $s$ protocol}
   \Statex \emph{state variable:} A pair $(t,{x})$, where $t \in  \mathcal{T}$, $x \in \mathcal{V} \cup \mathcal{W}$. 
	  \Statex {\emph{initial state:} Store the default pair $(0, y_s)$, where $y_s$ is the $s$th component of $\Phi(v_0),$ where $v_0 \in \mathcal{V}$ is the default initial value.}  
  \State   On receipt of \emph{get$\_$tag}: respond with the stored tag.
    \State On receipt of \emph{get}: respond with $(t,*)$, where $*$ can be a coded symbol or a full value.   
    \State On receipt of {$put(t_{new},v_{new})$}, such that $v_{new} \in \mathcal{V}$: If $t_{new}>t$, then set {$t \gets t_{new}$ and ${x} \gets v_{new}.$} In any case respond with acknowledgement.    
   \State On receipt of {$put(t_{new},y_{new})$}, such that $y_{new} \in \mathcal{W}$:  If $t_{new}\geq  t$, then set {$t \gets t_{new}$ and ${x} \gets y_{new}.$} In any case respond with acknowledgement.
\end{algorithmic}
 \label{Algorithm_mw_1}
\end{algorithm}
\begin{remark}
\label{saving_communication_cost}
During the \emph{finalize} phase of the write protocol, the writer does not need to send coded symbols to 
all the servers. Indeed, the writer sends full replicas to the first $k+2f$ servers in the \emph{pre-write} phase. Then, during the \emph{finalize} phase, the writer sends coded symbols to the remaining $N-k-2f$ servers, and only a finalize message with the corresponding tag to the first $k+2f$ servers so as to minimize the communication cost.
\end{remark}
\begin{remark}
A reader's \emph{write$\_$back} proceeds in two phases: \emph{pre-write} and \emph{finalize}. If a reader has received at least one coded response with tag $t$, then, the \emph{pre-write} phase of tag $t$ must have already completed. The reader does not need to carry out the \emph{pre-write} step and it may only perform the \emph{finalize} phase. 
Moreover, the \emph{write$\_$back} procedure can be entirely skipped if the read observes $N-f$ coded symbols with tag $t$.
\end{remark}
\subsection{Safety Properties}
\begin{definition}[Tag of an operation $\pi$]
We define tags of operations in the same way as in Definition \ref{def:tag_definition}. The put message used to define the tag can be either from a pre-write or a finalize phase since they have the same tag.
\end{definition}
\begin{lemma}[persistence of data]
\label{persistence_data_mw_1}
The value with the highest tag can be fully recovered at any point of an execution, as long as the number of server failures is bounded by $f$.
\end{lemma}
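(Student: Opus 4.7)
The plan is to define $t^*$ as the highest tag stored at any live server at the given point $P$ in the execution (with $t^*=0$ if all live servers still hold the default), and then show that the corresponding value $v^*$ can be reconstructed from the live servers, of which there are at least $N-f$. The easy case is $t^*=0$: each live server $s$ still stores the coded symbol $(\Phi(v_0))_s$, and since $N-f\geq k$, any $k$ of them suffice to decode $v_0$.

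For $t^*>0$, the key observation is that a tag $t^*$ can only appear at a server that actually received a \emph{put} message carrying tag $t^*$. By the server protocol, such a message comes from either a pre-write phase (which deposits the full replica) or a finalize phase (which deposits the $s$-th coded symbol) of some operation $O$ having tag $t^*$ (either the writer that generated $t^*$, or a reader performing \emph{write\_back} of $t^*$). I would then split into two cases. If no server has ever received a finalize put with tag $t^*$, then every server currently storing $t^*$ must have acquired it via a pre-write put. Since a subsequent pre-write put with the same tag leaves the stored element untouched (the condition is $t_{new}>t$, strict), such a server still holds the full replica $v^*$; hence the live server that witnesses $t^*$ returns $v^*$ directly.

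If instead some server has received a finalize put with tag $t^*$ from some operation $O$, then the protocol forces $O$ to have completed its pre-write phase, so at least $k+f$ servers received $O$'s pre-write put carrying $t^*$. Monotonicity of stored tags and the maximality of $t^*$ among live servers imply that each of these $k+f$ servers, if live, currently stores exactly tag $t^*$. Since at most $f$ servers fail, at least $k$ of them are live and store $t^*$. Each such live server currently holds either the full replica $v^*$ (not yet overwritten by any matching finalize put) or the coded symbol $(\Phi(v^*))_s$ (which overwrites the pre-written replica because the finalize condition is $t_{new}\geq t$). In the first case $v^*$ is immediately available, and in the second we have at least $k$ distinct coded symbols of $v^*$, so $v^*$ is decodable by the MDS property of $\Phi$.

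The main obstacle is the careful bookkeeping of server states under asynchronous arrival of pre-write and finalize messages, combined with the fact that several operations (the original writer of $t^*$ and any readers performing \emph{write\_back} of $t^*$) may propagate the same tag. The asymmetry between the strict update rule for pre-write puts and the non-strict rule for finalize puts is what forces the dichotomy ``either at least one live server keeps the full replica, or at least $k$ live servers have been upgraded to coded symbols,'' and this is the crux that needs to be articulated cleanly.
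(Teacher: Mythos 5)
Your proof is correct and takes essentially the same approach as the paper's: split on whether a replica of the top tag is still available, and otherwise use the fact that the \emph{finalize} phase can only begin after the \emph{pre-write} phase has deposited tag $t^*$ at $k+f$ servers, of which at least $k$ survive and so yield $k$ coded symbols (or a replica). Your version is a bit more careful than the paper's about the base case $t^*=0$ and about tracking exactly why each of the $k+f$ surviving pre-write recipients must, at point $P$, hold tag $t^*$ with either a replica or its coded symbol (rather than assuming they all stored replicas that were later "replaced"), but the decomposition and the key counting argument $k+f-f=k$ are the same.
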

\begin{proof}
We analyze the storage content of the system, at an arbitrary point $P$ of an execution.  Let $t$ be the maximum tag in the system, at point $P$. If $t$ is stored with a full replica at some server, then we can immediately recover the value with tag $t$. Otherwise, let $u$ be the number of the different coded symbols with tag $t$ that are stored in the system. Since there is no full replica with tag $t$, it follows from the write protocol that the writer of tag $t$ has finished its \emph{pre-write} phase and started its \emph{finalize} phase. Thus, the value with tag $t$ has been stored in at least $k+f$ servers. Moreover, these replicas must have been replaced by their corresponding coded symbols, or are stored in failed servers. Finally, noting that at most $f$ servers can fail, it follows that $u \geq (k+f)-f=k$. This means that there exists a sufficient number of coded symbols that allow recovery of the value with tag $t$.
\end{proof}
 The proof of atomicity follows along the lines of the single-writer algorithm, with appropriate modifications. In particular, the statements of Lemmas \ref{lem:a_safety_property}, 
\ref{lem:tagsordering_read} and \ref{lem:tagsordering_write} hold by considering the write quorum to be the \emph{finalize} quorum. The statement of Lemma \ref{lem:writeTagDifferentAlg1} follows from the \emph{query} phase. 
We start by stating the equivalent of Lemma \ref{lem:a_safety_property} in the context of Algorithm 2.
 
\begin{lemma}\label{lem1_mw_1}
Consider any execution $\alpha$ of Algorithm 2 and consider a write or read operation $\pi_1$ that completes in $\alpha$. Let $T(\pi_1)$ denote the tag of the operation $\pi_1$ and let $Q_1$ denote the quorum of servers from which responses are received by $\pi_1$ to its \emph{finalize} message.  Consider a read operation $\pi_r$ in $\alpha$ that is invoked after the termination of the write operation $\pi_1$. Suppose that the read $\pi_r$ receives responses to its \emph{get} message from a quorum $Q_r$. Then, \\
(1) Every server $s$ {in} $Q_1 \cap Q_r$ responds to the \emph{get} message from $\pi_r$ with a tag that is at least as large as $T(\pi_1)$.\\
(2) If, among the responses to the \emph{get} message of $\pi_r$ from the servers in $Q_{1} \cap Q_{r}$, the number of tags is at most {$\nu$}, then there is some tag $t$ such that  (i) $t \geq T(\pi_1)$, 
	  and  (ii) from the servers in $Q_1 \cap Q_r$, $\pi_r$ receives a full replica or $k$ coded symbols with tag $t$.
  \label{lem:a_safety_property_mw_1}
\end{lemma}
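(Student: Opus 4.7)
The plan is to mirror the proof of Lemma~\ref{lem:a_safety_property}, with the only new ingredient being that, in Algorithm~\ref{Algorithm_mw_1}, a server's stored element may be either a full replica or a coded symbol, and $Q_1$ must be interpreted as the \emph{finalize} quorum of $\pi_1$: by definition, each server $s\in Q_1$ acknowledged $\pi_1$'s \emph{finalize} put carrying tag $T(\pi_1)$.

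For part (1), I would argue that, after receiving that \emph{finalize} message, the stored tag at $s$ is at least $T(\pi_1)$. The server protocol updates the tag monotonically: the replica-put rule uses $t_{new}>t$, the coded-symbol-put rule uses $t_{new}\geq t$, and neither rule ever decreases the stored tag. Since $\pi_r$ is invoked after $\pi_1$ terminates, the \emph{get} of $\pi_r$ reaches $s$ strictly later, so $s$ replies with a tag $\geq T(\pi_1)$.

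For part (2), I would invoke the Pigeonhole principle on the responses $\pi_r$ sees from $Q_1\cap Q_r$: since $|Q_1\cap Q_r|\geq N-2f$ and at most $\nu$ distinct tags occur among them by hypothesis, some tag $t$ appears in at least $\lceil (N-2f)/\nu\rceil=k$ of these responses; by part~(1), $t\geq T(\pi_1)$. Each such response carries the element currently stored at a distinct server, and by the server protocol this element is either a full replica or a coded symbol of tag $t$. If any of the $k$ responses is a full replica of tag $t$, then (ii) is immediate. Otherwise all $k$ responses are coded symbols; because they come from $k$ distinct servers and server $s$ stores the $s$-th coordinate of $\Phi(\cdot)$, these form $k$ distinct coordinates of the codeword, which suffice to decode by the MDS property of $\Phi$.

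The step that deserves the most care, and the one I expect to be the main technical obstacle, is justifying that the tag stored at a server only moves up along an execution and that the coded-symbol update rule, which permits $t_{new}=t$, does not compromise consistency when a replica is replaced by a coded symbol carrying the same tag. Because in Algorithm~\ref{Algorithm_mw_1} all puts (pre-write and finalize) carrying tag $T(\pi_1)$ originate from the same writer and encode the same value, the replacement of a replica by a coded symbol at equal tag is consistent with the encoder $\Phi$, so the monotonicity needed for part~(1) and the decodability needed for part~(2) both go through without ambiguity.
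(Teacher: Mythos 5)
Your proof is correct and follows essentially the same route as the paper's: part~(1) reduces to the monotonicity of stored tags exactly as in Lemma~\ref{lem:a_safety_property}, and part~(2) reduces to a case split between full replicas and coded symbols together with a pigeonhole argument. The only cosmetic difference is the order of the case split in part~(2): the paper first checks whether \emph{any} response from $Q_1 \cap Q_r$ is a full replica (which immediately gives a valid tag via part~(1)) and only applies the pigeonhole argument when all responses are coded, whereas you apply pigeonhole first to locate the tag $t$ and then inspect the form of those $k$ matching responses; both orderings are valid, and your final paragraph on why the $t_{new}\ge t$ update rule is harmless (same tag $\Rightarrow$ same underlying value under $\Phi$, even when the put comes from a reader's \emph{write\_back} rather than the original writer) is a correct and worthwhile clarification of a point the paper leaves implicit by deferring to Lemma~\ref{lem:a_safety_property}.
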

\begin{proof}
\noindent \emph{Proof of (1).} Similar to the proof of (1) in Lemma \ref{lem:a_safety_property}.\\
\noindent \emph{Proof of (2).} Among the responses from $Q_1 \cap Q_r$, the read $\pi_r$ receives at most $\nu$ different tags. If among these responses, the reader receives a full replica, then, the Lemma follows. Otherwise, all the responses from $Q_1 \cap Q_r$ are associated with coded symbols. The rest of the proof is similar to Lemma \ref{lem1}.
\end{proof}
 
 \begin{lemma}
{Consider an execution $\alpha$ of Algorithm 2. Let $\pi_1$ be a write or read operation that completes in $\alpha$, and let $\pi_2$ be a read operation that completes in $\alpha$. }Let $T(\pi_1)$ denote the tag of operation $\pi_1$ and $T(\pi_2)$ denote the tag of operation $\pi_2$. If $\pi_2$ begins after the termination of $\pi_1$, then $T(\pi_2) \geq T(\pi_1)$.
	\label{lem:tagsordering_read_mw_1}
\end{lemma}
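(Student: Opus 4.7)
The plan is to mimic the proof of Lemma \ref{lem:tagsordering_read} from the single-writer case, replacing the use of Lemma \ref{lem:a_safety_property} with its multi-writer analogue Lemma \ref{lem:a_safety_property_mw_1}, and carefully accounting for the fact that a response in Algorithm \ref{Algorithm_mw_1} can carry either a full replica or a coded symbol. Let $Q_1$ be the \emph{finalize} quorum of $\pi_1$ (or, if $\pi_1$ is a read, the quorum that responded to its \emph{finalize} write-back), and let $Q_2$ be the quorum of servers whose \emph{get} responses $\pi_2$ uses. Set $Q=Q_1 \cap Q_2$, so $|Q|\ge N-2f$. Proceed by contradiction and assume $T(\pi_2)<T(\pi_1)$.

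By part (1) of Lemma \ref{lem:a_safety_property_mw_1}, every server in $Q$ answers $\pi_2$'s \emph{get} with a tag at least $T(\pi_1)>T(\pi_2)$. Since $\pi_2$ terminates, its returned tag $T(\pi_2)$ must satisfy either Line \ref{algo2_line_1} ($T(\pi_2)$ appears in at least $f+1$ responses, counting full replicas and coded symbols alike) or the $\nu$-tag condition that follows it. For Line \ref{algo2_line_1}: any response carrying tag $T(\pi_2)$ must originate from $Q_2\setminus Q_1$, but $|Q_2\setminus Q_1|\le f$, contradicting the need for $f+1$ such responses.

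For the other case, all tags reported by servers in $Q$ strictly exceed $T(\pi_2)$, and the number of distinct such tags seen by $\pi_2$ is at most $\nu$; in particular, the number of distinct tags in the responses from $Q$ is at most $\nu$. Part (2) of Lemma \ref{lem:a_safety_property_mw_1} then produces a tag $t\ge T(\pi_1)$ that, within the responses from $Q$, is backed either by a full replica or by $k$ matching coded symbols, making $t$ decodable from $\pi_2$'s response set $R$. Since $t$ is the largest tag in $R$ (it dominates the at-most-$\nu$ distinct tags above $T(\pi_2)$, with $T(\pi_2)$ being smaller), $t$ itself satisfies the ``at most $\nu$ strictly higher tags'' condition, so $t\in T$ in the read protocol; hence $T(\pi_2)=\max(T)\ge t\ge T(\pi_1)$, contradicting $T(\pi_2)<T(\pi_1)$.

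The only real subtlety, and the one worth checking carefully, is the decodability bookkeeping in case two: in the multi-writer setting a single full replica already makes a tag decodable, whereas $k$ matching coded symbols are required otherwise. Lemma \ref{lem:a_safety_property_mw_1}(2) is phrased precisely to cover both cases, so applying it—rather than its single-writer counterpart—handles this cleanly. No further arguments are needed; Lemma \ref{lem:tagsordering_read_mw_1} follows.
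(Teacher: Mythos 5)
Your overall plan is exactly the paper's: the paper's own proof of Lemma~\ref{lem:tagsordering_read_mw_1} simply says ``Similar to Lemma~\ref{lem:tagsordering_read},'' i.e.\ transplant the single-writer argument, replacing Lemma~\ref{lem:a_safety_property} by Lemma~\ref{lem:a_safety_property_mw_1} and taking $Q_1$ to be the \emph{finalize} quorum. Your first two paragraphs carry this out correctly, and your note at the end about the ``full replica or $k$ coded symbols'' bookkeeping is the right thing to flag.

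There is, however, one small but real flaw in the third paragraph. You assert that ``$t$ is the largest tag in $R$'' because it ``dominates the at-most-$\nu$ distinct tags above $T(\pi_2)$.'' Lemma~\ref{lem:a_safety_property_mw_1}(2) only produces \emph{some} tag $t\ge T(\pi_1)$ that is decodable from the responses in $Q$; it gives no guarantee that $t$ is maximal among the tags in $R$, or even among the tags reported by $Q$. The Pigeonhole step behind that lemma picks a tag that is sufficiently \emph{frequent}, not a tag that is \emph{largest}, and nothing rules out other servers (in $Q$ or in $Q_2\setminus Q_1$) reporting strictly larger tags. Fortunately the conclusion you want still holds, but for a different reason: since the read's returned tag $T(\pi_2)$ satisfies the $\nu$-tag condition, there are at most $\nu$ distinct tags in $R$ strictly above $T(\pi_2)$; because $t\ge T(\pi_1)>T(\pi_2)$, every tag strictly above $t$ is also strictly above $T(\pi_2)$, so there are at most $\nu-1<\nu$ tags strictly above $t$. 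Hence $t$ satisfies the $\nu$-tag condition, $t\in T$, and $T(\pi_2)=\max(T)\ge t\ge T(\pi_1)$, the desired contradiction. Replace the ``largest tag'' claim with this monotonicity observation and the proof is fully correct.
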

\begin{proof}
Similar to Lemma \ref{lem:tagsordering_read}. 
\end{proof}
Remark \ref{multiple_return_values} holds in the context of Algorithm \ref{Algorithm_mw_1}. In particular, a reader can safely return a value for which it has received at $f+1$ responses (which can be coded or non-coded).
\begin{lemma}
{Consider an execution $\alpha$ of Algorithm 2. Let $\pi_1$ be a write or read operation that completes in $\alpha$, and $\pi_2$ be a write operation that completes in $\alpha$.}  Let $T(\pi_1)$ denote the tag of operation $\pi_1$ and $T(\pi_2)$ denote the tag of operation $\pi_2$.  If $\pi_2$ begins after the termination of $\pi_1$, then $T(\pi_2) > T(\pi_1)$. 
	\label{lem:tagsordering_write_mw_1}
\end{lemma}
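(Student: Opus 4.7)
The plan is to leverage the \emph{query} phase that prefaces every write in Algorithm \ref{Algorithm_mw_1}, and to mirror the structure of the single-writer argument in Lemma \ref{lem:tagsordering_write}, splitting on whether $\pi_1$ is a write or a read. The new ingredient relative to the single-writer proof is the lexicographic tag structure $(z, \text{`id'})$: the strict inequality at the tag level reduces to showing that the integer component $z_2$ chosen by $\pi_2$ strictly exceeds the integer component $z_1$ of $T(\pi_1)$, which will follow directly from the ``+1'' in the tag-generation rule once we establish that $\pi_2$'s query sees an integer component at least $z_1$.

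First I would prove an analogue of part (1) of Lemma \ref{lem:a_safety_property_mw_1} for the query phase. Let $Q_q$ denote the quorum of servers that respond to $\pi_2$'s \emph{get\_tag} messages, and let $Q_1$ denote the \emph{finalize} quorum associated with $\pi_1$ (defined below for each case). Since $|Q_1 \cap Q_q| \geq N - 2f \geq 1$, there is at least one server $s \in Q_1 \cap Q_q$. The server protocol never decreases the stored tag (both \emph{put} handlers only overwrite when the incoming tag is larger, resp.\ at least as large), so $s$ responds to $\pi_2$'s query with a tag that is at least as large as $T(\pi_1)$. In particular, the maximum tag observed by $\pi_2$ in its query has integer component $z \geq z_1$, and consequently $T(\pi_2) = (z+1, \text{`id}_2\text{'})$ satisfies $z+1 > z_1$, giving $T(\pi_2) > T(\pi_1)$ in the lexicographic order.

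It remains to justify the existence of $Q_1$ in both cases. If $\pi_1$ is a write, then $Q_1$ is the quorum that acknowledged $\pi_1$'s \emph{finalize} messages, so every server in $Q_1$ holds a tag at least $T(\pi_1)$ from the moment $\pi_1$ completes onward. If $\pi_1$ is a read, then by Definition of the tag of an operation, $T(\pi_1)$ is the tag propagated during $\pi_1$'s \emph{write\_back}; since $\pi_1$ is complete, its \emph{write\_back} \emph{finalize} phase received acknowledgements from a quorum, which again serves as $Q_1$. In either case $Q_1$ exists before the invocation of $\pi_2$, and the intersection argument above applies because $\pi_2$'s query phase is invoked strictly after $\pi_1$ terminates.

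The main obstacle will be the read case, and in particular checking that the tag $T(\pi_1)$ carried by $\pi_1$'s \emph{write\_back} is in fact the tag whose integer component needs to be exceeded — as opposed to some larger tag that $\pi_2$ might happen to observe instead. This is only a matter of bookkeeping: the write\_back quorum has tags $\geq T(\pi_1)$, so the maximum observed integer component is $\geq z_1$, and the strict ``+1'' bump in $\pi_2$'s tag formation delivers the conclusion regardless of which client identifier is attached. No appeal to persistence (Lemma \ref{persistence_data_mw_1}) is needed for this lemma; only the monotone behavior of the server state and the nonemptiness of $Q_1 \cap Q_q$.
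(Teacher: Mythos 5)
Your proposal is correct and takes essentially the same route as the paper's proof: both rest on the observation that $\pi_1$'s \emph{finalize} quorum must intersect $\pi_2$'s \emph{query} quorum, that servers never decrease their stored tag, and that the ``$+1$'' in tag formation therefore forces the integer component of $T(\pi_2)$ strictly above that of $T(\pi_1)$, which settles the lexicographic comparison independently of client identifiers. The only cosmetic difference is that the paper does not split on whether $\pi_1$ is a write or a read; as you yourself note at the end, both cases collapse to the same argument once you observe that a completed read's \emph{write\_back} also finishes a \emph{finalize} phase, so the case distinction adds nothing beyond bookkeeping.
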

\begin{proof}
The operation $\pi_1$ terminates after completing its \emph{finalize} phase, during which it receives responses from a quorum, $Q_f(\pi_1)$. From the server protocol, we can observe that every server $s$ in $Q_f(\pi_1)$ stores a tag that is at least as large as $T(\pi_1)$ at the point of responding to the second \emph{put} message of $\pi_1$.
 We denote the quorum of servers that respond to the \emph{query} phase of $\pi_2$ as $Q_q(\pi_2)$. It follows that $\pi_2$ receives tags that are no smaller than $T(\pi_1)$ from every server $s \in Q_f(\pi_1) \cap Q_q(\pi_2)$. Because the integer part is incremented, the largest integer part of the tag $z$ that the writer in $\pi_2$ observes is no less than $ T(\pi_1)$. It follows that $T(\pi_2) > T(\pi_1) $.
\end{proof}

\begin{lemma}
\label{lem:tagsdifferentwriters}
Let $\pi_1$, $\pi_2$ be write operations that terminate in an execution $\alpha$ of Algorithm 2. Then $T(\pi_1) \neq T(\pi_2)$.
\end{lemma}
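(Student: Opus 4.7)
The plan is to split into two cases based on whether the two write operations come from the same client or from different clients, exploiting the structure of tags as pairs $(z, \text{`id'})$ ordered lexicographically.

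First, suppose $\pi_1$ and $\pi_2$ are invoked at distinct clients. Then the identifier components of $T(\pi_1)$ and $T(\pi_2)$ differ by assumption on the predefined ordering on client identifiers, and thus the two tags cannot be equal as elements of $\mathcal{T}$. This case is essentially immediate from the tag construction in the \emph{pre-write} phase of Algorithm \ref{Algorithm_mw_1}, where each writer appends its own `id' to the integer component.

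Next, suppose $\pi_1$ and $\pi_2$ are invoked at the same client. By the well-formedness assumption stated in Section \ref{sec:preliminaries}, a new invocation at a client waits for the response of the preceding invocation at the same client, so one of the two operations must begin after the termination of the other. Without loss of generality, assume $\pi_2$ begins after $\pi_1$ terminates. Then Lemma \ref{lem:tagsordering_write_mw_1} applies and yields $T(\pi_2) > T(\pi_1)$, in particular $T(\pi_1) \neq T(\pi_2)$.

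I do not expect a real obstacle here: the statement is a direct consequence of the tag-generation rule together with the previously established Lemma \ref{lem:tagsordering_write_mw_1}, and the only mildly subtle point is remembering to invoke well-formedness in the same-client case in order to reduce to the ordering lemma.
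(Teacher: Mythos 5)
Your proof is correct and follows essentially the same approach as the paper: the same case split on whether the two writes share a client, resolving the distinct-client case by the identifier component of the tag and the same-client case by well-formedness plus Lemma \ref{lem:tagsordering_write_mw_1}.
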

\begin{proof}
Let $\pi_1$, $\pi_2$ be write operations that terminate in an execution $\alpha$. Let $C_1$, $C_2$ respectively indicate the identifiers of the client nodes at which operations $\pi_1$, $\pi_2$ are invoked. We consider two cases.

\noindent\emph{Case 1}: $C_1 \neq C_2$: From the write protocol, we note that $T(\pi_i)$ = $(z_i,C_i)$. Since $C_1 \neq C_2$ we have $T(\pi_1) \neq T(\pi_2)$.\\
\noindent\emph{Case 2}: $C_1 = C_2$: Recall that operations at the the same clients are \emph{well-formed}: where a new invocation awaits the response of a preceding invocation. It means that one of the operations should complete before the other starts. Suppose that, without loss of generality, the write operation $\pi_1$ completes before the write operation $\pi_2$ starts. Then, Lemma \ref{lem:tagsordering_write_mw_1} implies that $T(\pi_2) > T(\pi_1)$. This implies that $T(\pi_2) \neq T(\pi_1) $.
\end{proof}

\begin{theorem}
\label{thm:atomicity_mw_1} 
Algorithm \ref{Algorithm_mw_1} emulates an atomic read-write object.
\end{theorem}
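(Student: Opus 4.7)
The plan is to mirror the proof of Theorem \ref{thm:atomicity} almost verbatim, applying Lemma \ref{lem:partialorder} to the partial ordering $\prec$ in Definition \ref{def:partialorder}, but now invoking the multi-writer versions of the auxiliary lemmas established just above. Concretely, I will let $\beta$ be a well-formed sequence of external-interface actions containing no incomplete operations and let $\Pi$ be the set of operations in $\beta$; since every operation in $\Pi$ completes, each has a well-defined tag, and Definition \ref{def:partialorder} gives a partial order on $\Pi$.

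First I would verify Property (1) of Lemma \ref{lem:partialorder}. Assume the response of $\pi_1$ precedes the invocation of $\pi_2$. If $\pi_2$ is a read, Lemma \ref{lem:tagsordering_read_mw_1} yields $T(\pi_2) \ge T(\pi_1)$, so by Definition \ref{def:partialorder} we cannot have $\pi_2 \prec \pi_1$. If $\pi_2$ is a write, Lemma \ref{lem:tagsordering_write_mw_1} gives $T(\pi_2) > T(\pi_1)$, and the same conclusion follows. For Property (2), where $\pi_1$ is a write and $\pi_2$ is arbitrary: if $\pi_2$ is a read, the two operations are comparable by Definition \ref{def:partialorder} since either $T(\pi_1) > T(\pi_2)$, $T(\pi_1) < T(\pi_2)$, or $T(\pi_1)=T(\pi_2)$ (in which case $\pi_1 \prec \pi_2$ by the write–read clause of the definition); if $\pi_2$ is a write, Lemma \ref{lem:tagsdifferentwriters} guarantees $T(\pi_1) \ne T(\pi_2)$, so the two tags are strictly comparable.

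Property (3) is the step that needs the most care and is the main obstacle in the multi-writer setting. I would split on whether the returned tag $T(\pi_1)$ equals $0$ (default case: no write precedes $\pi_1$ in $\prec$, and the read reconstructs $v_0$ from the default coded symbols or a default replica, matching clause (3) of Lemma \ref{lem:partialorder}) or $T(\pi_1)>0$. In the latter case I need to identify a unique write $\pi_2$ with $T(\pi_2)=T(\pi_1)$ whose value the reader returns. Uniqueness of such a write comes from Lemma \ref{lem:tagsdifferentwriters}; existence comes from the server protocol, since a tag with non-zero integer component can only appear at a server because some writer sent either a $put(t,v)$ pre-write message or a $put(t,y_s)$ finalize message with that tag. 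The subtle point is that in Algorithm \ref{Algorithm_mw_1} the reader may recover the value either from a full replica at some server or by decoding $k$ matching coded symbols; I would argue that in both cases the recovered bits correspond to the value $w$ used by $\pi_2$: the full-replica case is immediate from the pre-write server rule, and the decoding case is immediate from the MDS property of $\Phi$ applied to the $k$ symbols produced in $\pi_2$'s finalize phase. Finally, that $\pi_2$ is the \emph{last} write preceding $\pi_1$ with respect to $\prec$ follows from $T(\pi_2)=T(\pi_1)$ together with the write–read clause of Definition \ref{def:partialorder}: any write $\pi_3$ with $\pi_2 \prec \pi_3 \preceq \pi_1$ would force $T(\pi_3)>T(\pi_2)=T(\pi_1)$, contradicting $T(\pi_3) \le T(\pi_1)$.

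Putting these three properties together with Lemma \ref{lem:partialorder} yields atomicity; the write-up will be short since nearly every ingredient has been isolated into a lemma already, and the only genuinely new bookkeeping is handling the replica-vs-coded dichotomy in Property (3).
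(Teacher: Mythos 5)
Your proposal is correct and follows essentially the same route as the paper, which proves this theorem by the single line ``Same steps as in Theorem~\ref{thm:atomicity}, with Lemmas~\ref{lem:tagsordering_read}, \ref{lem:tagsordering_write} and \ref{lem:writeTagDifferentAlg1} replaced by Lemmas~\ref{lem:tagsordering_read_mw_1}, \ref{lem:tagsordering_write_mw_1} and \ref{lem:tagsdifferentwriters}, respectively.'' Your extra care with Property (3) --- handling the replica-vs-coded-symbol dichotomy --- is precisely the small adaptation the paper leaves implicit, and it is handled correctly.
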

\begin{proof}
Same steps as in Theorem \ref{thm:atomicity}, with Lemmas 
\ref{lem:tagsordering_read}, \ref{lem:tagsordering_write} and 
\ref{lem:writeTagDifferentAlg1}
replaced by Lemmas 
\ref{lem:tagsordering_read_mw_1}, \ref{lem:tagsordering_write_mw_1} and  \ref{lem:tagsdifferentwriters}, respectively.
\end{proof}

%
\begin{remark}
A write operation $\pi_1$ needs to be counted as ``concurrent'' with the read $\pi_r$ only if its tag $T(\pi_1)$ is larger than $t_w$ defined in Lemma \ref{lem:a_secondsafety_property}. In particular, suppose $\pi_1$ is a failed write operation, then it is not counted as concurrent with the read unless $T(\pi_1) > t_w$.
\end{remark}
The statement of the safety property in Lemma \ref{lem:a_secondsafety_property} holds also in the setting of Algorithm \ref{Algorithm_mw_1}, and can also be used to prove the liveness properties of Algorithm \ref{Algorithm_mw_1}. 

\begin{lemma} \label{lem2_mw_1}
Consider any execution $\alpha$ of Algorithm \ref{Algorithm_mw_1}. {Let $\pi_r$ denote a read operation in $\alpha$ that receives a quorum $Q_r$ of responses to its \emph{get} message.} Let $\mathcal{S}$ denote the set of all writes that terminate before the invocation of $\pi_r$ in $\alpha$. {If $\mathcal{S}$ is non-empty,} let $t_w$ denote the largest among the tags of the operations in $\mathcal{S}$. {If $\mathcal{S}$ is empty, let $t_w=0$}. 

If the number of writes concurrent with the read $\pi_r$ in $\alpha$ is smaller than $\nu$, then there is some tag ${t}$ such that  \\ (1) ${t} \geq t_w$,\\ (2) $\pi_r$ can recover the value with tag ${t}$, and
\\ (3) the number of tags that are higher than $t$ is smaller than $\nu$.
\label{lem:a_secondsafety_property_mw}
\end{lemma}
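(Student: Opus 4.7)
The plan is to follow the same structure as the proof of Lemma \ref{lem:a_secondsafety_property} in the single-writer setting, replacing the write quorum with the \emph{finalize} quorum and invoking Lemma \ref{lem:a_safety_property_mw_1} in place of Lemma \ref{lem:a_safety_property}. The essence is to show that the responses from a carefully chosen intersection of quorums do not span too many distinct tags, and then extract a decodable tag from that intersection.

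First, I would bound the number of distinct tags higher than $t_w$ in the responses that $\pi_r$ receives. Every tag strictly greater than $t_w$ that appears among the responses in $Q_r$ must, by the server protocol and Lemma \ref{lem:tagsdifferentwriters}, have been generated by a write operation $\pi$ whose \emph{pre-write} phase delivered before $\pi_r$'s \emph{get} reached the responding server. Such a write cannot belong to $\mathcal{S}$ (otherwise its tag would be $\le t_w$), so it must be concurrent with $\pi_r$. Since fewer than $\nu$ writes are concurrent with $\pi_r$, and each introduces a single tag (by the \emph{query}/\emph{pre-write} structure), the reader sees strictly fewer than $\nu$ distinct tags that exceed $t_w$.

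Next, I would focus on the finalize quorum of $\pi_w$, the write in $\mathcal{S}$ achieving tag $t_w$ (if $\mathcal{S}=\emptyset$, the same argument is vacuous with $t_w=0$, as every stored tag is $\ge 0$). Let $Q_w$ be this finalize quorum. By part (1) of Lemma \ref{lem:a_safety_property_mw_1}, every server in $Q_w \cap Q_r$ responds to $\pi_r$'s \emph{get} with a tag $\ge t_w$. Combined with the bound from the previous paragraph, the responses from $Q_w \cap Q_r$ carry at most $\nu$ distinct tags (at most $\nu-1$ strictly above $t_w$, plus possibly $t_w$ itself). Invoking part (2) of Lemma \ref{lem:a_safety_property_mw_1}, there exists a tag $t \geq t_w$ such that the reader obtains from $Q_w \cap Q_r$ either a full replica or $k$ matching coded symbols tagged with $t$, which establishes claims (1) and (2) of the lemma. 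Claim (3) is immediate: any tag exceeding $t$ also exceeds $t_w$, and by the concurrency bound such tags number fewer than $\nu$.

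The only subtlety, and the place where I would take some care, is the counting step that asserts each concurrent write contributes at most one new tag above $t_w$. In the multi-writer case this requires that distinct concurrent writers never collide in their tags (handled by Lemma \ref{lem:tagsdifferentwriters} because tags embed the client identifier) and that a single writer does not generate two distinct tags while concurrent with one read (handled by well-formedness). Once this is pinned down, the remainder of the argument is essentially verbatim from the single-writer version and the proof concludes.
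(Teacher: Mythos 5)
Your proof is correct and matches the paper's intent: the paper's own proof for this lemma is just the remark ``Similar to Lemma~\ref{lem:a_secondsafety_property},'' and your argument carries out exactly the adaptation that remark implies, substituting the finalize quorum for the write quorum and Lemma~\ref{lem:a_safety_property_mw_1} for Lemma~\ref{lem:a_safety_property}. The careful counting of distinct tags above $t_w$ (noting tags embed client identifiers, so concurrent writers cannot collide, and well-formedness prevents a single client from spawning two overlapping writes) is a sound and useful elaboration of the step the paper leaves implicit.
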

\begin{proof}
Similar to Lemma \ref{lem:a_secondsafety_property}.
\end{proof}
\subsection{Liveness Properties}
The proofs of the liveness properties of Algorithm \ref{Algorithm_mw_1} are similar to those of Algorithm 1, and are omitted.
\begin{theorem}[Termination of writes]
Consider any fair execution $\alpha$ of Algorithm \ref{Algorithm_mw_1} where the number of server failures is at most $f$, and the write client does not fail. Then, every write operation terminates in $\alpha$.
\label{thm:writecompletes_mw_1}
\end{theorem}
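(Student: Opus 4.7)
The plan is to mirror the proof of Theorem \ref{thm:writecompletes}, but now handle the three phases of the multi-writer write protocol separately, since a write operation only terminates once all three of the \emph{query}, \emph{pre-write}, and \emph{finalize} phases complete. For each phase I would argue that the client eventually collects enough responses because the execution is fair and at most $f$ servers crash, so there is always a set of at least $N-f$ non-failing servers that eventually reply to any message they receive.

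First I would handle the \emph{query} phase. The client broadcasts \emph{get$\_$tag} to all $N$ servers; from the server protocol, any non-failing server that receives the message responds with its stored tag. Fairness of $\alpha$ together with reliable channels guarantees that every such message is eventually delivered and every response eventually received, so the writer eventually obtains responses from the at least $N-f$ non-failing servers, which form a quorum. The same argument applies verbatim to the \emph{finalize} phase, since it also awaits responses from a quorum and every non-failing server responds with an acknowledgement upon receipt of a \emph{put} message.

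The \emph{pre-write} phase is the only point that needs a small extra observation: the writer only contacts the first $k+2f$ servers and must collect $k+f$ acknowledgements. Here I would invoke Remark \ref{choice_nu} (or the setting $k=\lceil\frac{N-2f}{\nu}\rceil$ with $\nu\ge 1$) to note that $k+2f\le N$, so the set $\{1,\ldots,k+2f\}$ is well-defined. Among these $k+2f$ servers, at most $f$ can fail, so at least $k+f$ are non-failing; by fairness each of them eventually receives the \emph{put} message and responds with an acknowledgement, and hence the writer eventually collects the required $k+f$ acknowledgements.

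Combining the three bullet points, every phase of the write protocol terminates, so $\pi$ terminates, which gives the theorem. The proof is essentially routine; the only nontrivial point to flag is the cardinality check $k+2f \le N$ for the \emph{pre-write} phase, since that phase is the only one in the paper that targets a strict subset of the server population, and thus the one place where the standard ``quorum $\subseteq$ non-failing servers'' argument does not directly apply.
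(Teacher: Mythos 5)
Your proposal is correct and matches the paper's (omitted, delegated-to-Algorithm-1) argument; the only genuine new content over Theorem~\ref{thm:writecompletes} is exactly what you flag, namely that the \emph{pre-write} phase contacts only servers $\{1,\dots,k+2f\}$ yet needs $k+f$ acknowledgements, which is fine since at most $f$ of those $k+2f$ servers can crash. (The paper itself records the needed inequality $k+f \le N-f$ just before Algorithm~\ref{Algorithm_mw_1}, so $k+2f\le N$ is already in hand.)
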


\begin{theorem}[Termination of reads]
Consider any fair execution $\alpha$ of Algorithm \ref{Algorithm_mw_1} where the number of server failures is at most $f$. Consider any read operation that is invoked at a non-failing client in $\alpha$. If the number of writes concurrent with the read is strictly smaller than $\nu$, and the read client does not fail, then the read operation completes in $\alpha$.
\label{thm:readcompletes_mw_1}
\end{theorem}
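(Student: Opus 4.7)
The plan is to mirror the structure of the proof of Theorem \ref{thm:readcompletes}, making two modest adaptations to account for the features of Algorithm \ref{Algorithm_mw_1}: (i) a decodable tag may come from either a single full replica or from $k$ matching coded symbols, and (ii) \emph{write}$\_$\emph{back} is now a two-phase procedure (\emph{pre-write} and \emph{finalize}) rather than a single put-and-await-quorum step. I would structure the argument around three claims: the reader obtains a quorum of \emph{get} responses, the set $T$ of candidate tags is non-empty, and each of the two write-back phases terminates.

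First, since the execution $\alpha$ is fair, the read client does not fail, and at most $f$ servers fail, the set of non-failing servers has size at least $N-f$ and hence forms a quorum. Every non-failing server eventually receives and responds to the \emph{get} message of $\pi_r$, so $\pi_r$ eventually receives responses from some quorum $Q_r$. Next, I would invoke Lemma \ref{lem:a_secondsafety_property_mw}: the hypothesis that fewer than $\nu$ writes are concurrent with $\pi_r$ yields a tag $t$ such that (a) $t \geq t_w$, (b) the value with tag $t$ is recoverable from the responses (either via a full replica observed in $Q_r$ or via $k$ matching coded symbols), and (c) the number of tags strictly higher than $t$ in $R$ is smaller than $\nu$. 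Property (b) certifies that $t$ is decodable, and property (c) implies line 10 of the read protocol is satisfied, so $T \neq \emptyset$ and the reader proceeds into \emph{write}$\_$\emph{back} with some pair $(t^*, v^*)$, where $t^* = \max(T)$.

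Finally, I would verify that both phases of \emph{write}$\_$\emph{back} terminate. For the \emph{pre-write} phase, the reader addresses the $k+2f$ designated servers; since $\nu \geq 1$ implies $k \leq N-2f$, this set is well-defined, and since at most $f$ of its members fail, at least $k+f$ non-failing servers eventually acknowledge the \emph{put} message, meeting the threshold required by the protocol. For the \emph{finalize} phase, the reader sends coded symbols to all $N$ servers and awaits a quorum of acknowledgments; the non-failing servers again form a quorum that eventually responds. Combining the three claims, $\pi_r$ terminates.

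The only subtle point, and hence the one I would emphasize, is the decoding criterion: unlike in Algorithm \ref{sw_algo}, a single full replica of tag $t$ observed in $Q_r$ already suffices to make $t$ decodable, so the appeal to Lemma \ref{lem:a_secondsafety_property_mw} must carry both cases of recovery through cleanly. Once that is handled inside the lemma (as it is, via the statement ``$\pi_r$ can recover the value with tag $t$''), the rest of the proof is a routine translation of the single-writer argument.
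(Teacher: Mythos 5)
Your proof is correct and takes essentially the same approach as the paper, which explicitly omits this proof with the remark that the liveness arguments for Algorithm~\ref{Algorithm_mw_1} are ``similar to those of Algorithm~1.'' Your reconstruction is exactly the intended adaptation: mirror the three-part structure of the proof of Theorem~\ref{thm:readcompletes}, invoke Lemma~\ref{lem:a_secondsafety_property_mw} in place of Lemma~\ref{lem:a_secondsafety_property}, and add the short verification that both the \emph{pre-write} phase (which needs $k+f$ acknowledgments out of $k+2f$ designated servers, guaranteed since $k+2f\le N$ and at most $f$ of them fail) and the \emph{finalize} phase terminate.
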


Similar to Algorithm \ref{sw_algo}, if a read operation aborts, the reader in Algorithm \ref{Algorithm_mw_1} can repeatedly invoke new read operations until it can decode a value that it can safely return. When the read operation takes many iterations, we use the same read protocol as in Figure \ref{FW_termination_SW_algo}. Next we consider FW termination of a read. The challenge lies in the fact that even if there are finitely many writes, there can be infinitely many read operations that write back, preventing the read from satisfying Lines \ref{algo1_line_11}, \ref{algo1_line_12} or \ref{algo1_line_13}  in Figure \ref{FW_termination_SW_algo} in any iteration.
In our proof, we make use of Lemma \ref{persistence_data_mw_1}.
\begin{lemma}[Finite-write termination]
\label{FW_termination_MW_algo_1}
Algorithm \ref{Algorithm_mw_1} with re-invoked reads as in Figure \ref{FW_Alg_1_variant} guarantees atomicity and FW termination. 
\end{lemma}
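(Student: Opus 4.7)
My plan is to mirror the argument of Lemma \ref{FW_termination_SW_algo}, substituting the multi-writer safety lemmas for the single-writer ones and invoking Lemma \ref{persistence_data_mw_1} to close the remaining gap. For atomicity, I would first show that $\Gamma$ fixed in the first iteration of a modified read $\pi_r$ upper-bounds $T(\pi)$ for every operation $\pi$ completed before $\pi_r$ was invoked: intersecting the finalize quorum $Q(\pi)$ with $\pi_r$'s first-iteration quorum and applying part (1) of Lemma \ref{lem:a_safety_property_mw_1} yields $\Gamma \ge T(\pi)$. Hence a return via line \ref{algo1_line_13} ($t \ge \Gamma$) respects Lemma \ref{lem:tagsordering_read_mw_1}. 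A return via lines \ref{algo1_line_11} or \ref{algo1_line_12} is equivalent to a single-iteration read returning from the specific quorum $R_i$ used for the check, and Lemma \ref{lem:tagsordering_read_mw_1} applies to that reduction. Lemmas \ref{lem:tagsordering_write_mw_1} and \ref{lem:tagsdifferentwriters} transfer unchanged, so the atomicity proof of Theorem \ref{thm:atomicity_mw_1} carries over.

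For FW termination, consider a fair execution $\alpha$ with finitely many invoked writes and a read $\pi_r$ at a non-failing client. The central observation is that read write-backs propagate only tags they have already observed, and hence only tags generated by some writer; the set of tags ever appearing in the system is therefore finite. Let $P$ be a point after which every invoked write has terminated or crashed, every write-back of every already-returned read has completed, and all in-flight messages have been delivered. After $P$, each server's stored tag is monotone and bounded by this finite tag set, so server states stabilize.

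In the fair continuation, the accumulated set $L$ of $\pi_r$ eventually contains a stable-state response from every non-failing server. I would then pick $R_i \in \mathcal{R}$ consisting of one such stable-state response per non-failing server, so that $|R_i| \ge N - f$. Let $t^\ast$ denote the maximum tag in $R_i$; being the maximum, $t^\ast$ trivially satisfies condition (ii) of Figure \ref{FW_Alg_1_variant}. Decodability of $t^\ast$ within $R_i$ follows from Lemma \ref{persistence_data_mw_1} combined with a quorum-intersection bound: either $t^\ast$ appears in $R_i$ as a full replica (immediate recovery), or the writer of $t^\ast$ has completed its \emph{pre-write} phase, so at least $k+f$ servers permanently hold tag $\ge t^\ast$; intersecting with the size-$(N-f)$ selection $R_i$ leaves at least $k$ responses in $R_i$ whose tag is exactly $t^\ast$, yielding a full replica or $k$ matching coded symbols. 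Hence $T \ne \emptyset$ in this iteration, $\pi_r$ exits the \emph{Repeat} loop, and termination of the subsequent \emph{write$\_$back} follows as in Theorem \ref{thm:writecompletes_mw_1}.

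The main obstacle, flagged immediately before the lemma, is to rule out the scenario of an infinite stream of concurrent read write-backs continually blocking the return conditions in the multi-writer setting (the single-writer bookkeeping that gave at most one failed concurrent write no longer applies). The resolution is precisely the tag-recycling observation above: no new tag can ever enter the system after the finitely many writes terminate, so the stable-state quorum analysis applies, and Lemma \ref{persistence_data_mw_1} supplies the decodability needed to discharge the read's return conditions.
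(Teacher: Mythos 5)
Your atomicity argument is the same as the paper's (both carry over the single-writer argument from Lemma \ref{FW_termination_SW_algo} using the multi-writer safety lemmas), so I will focus on termination, where your route is genuinely different.

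The paper argues via a pigeonhole on iterations: since there are finitely many writes, each server's $(tag,element)$ pair changes only finitely often and there are finitely many quorum sets, so eventually some two \emph{consecutive} iterations of $\pi_r$ observe the same quorum with identical responses. That pair of iterations certifies an instantaneous image of the quorum, to which Lemma \ref{persistence_data_mw_1} applies, yielding a decodable tag satisfying Line \ref{algo1_line_12} — a contradiction with failure to terminate. The paper then also extracts an explicit bound on the iteration count. You instead argue global server-state stabilization (finite tag set, monotone tags, and a one-way replica-to-coded transition per tag), then show that once the read samples a quorum of stable responses, the maximum tag $t^\ast$ in that quorum is decodable. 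Your decodability step is, in effect, a re-derivation of Lemma \ref{persistence_data_mw_1} restricted to the sampled quorum rather than the whole system: the pre-write of $t^\ast$ (when one exists) landed at $k+f$ servers, at most $f$ of which are missing from the quorum, and since $t^\ast$ is maximal none of the survivors can hold a strictly larger tag, so you get a replica or $k$ matching coded symbols. This is correct and self-contained, and it is a cleaner conceptual decomposition than the paper's, at the cost of not producing the explicit iteration bound (which the lemma does not require).

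One technical overclaim you should repair: you assert that ``the accumulated set $L$ eventually contains a stable-state response from \emph{every} non-failing server.'' In an asynchronous model where each iteration's \emph{get} returns once a quorum has responded, a particular non-failing server could always reply too late to be recorded, so this is not guaranteed. Fortunately your argument does not need it: any single iteration started after the stabilization point already delivers $\ge N-f$ stable responses, and your intersection count (``$k+f$ servers minus at most $f$ outside the selected $R_i$'') works for any quorum-sized set of stable responses, not only the full complement of non-failing servers. Restating the claim as ``eventually $L$ contains stable responses from at least $N-f$ servers'' and taking $R_i$ from those closes the gap. A second small imprecision: the point $P$ you define does not itself force stabilization, since reads that return after $P$ still write back; what actually gives stabilization is the finiteness/monotonicity observation you state next, so $P$ can be dropped in favor of that argument directly.
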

\begin{proof}
The proof of atomicity is similar to Lemma \ref{FW_termination_SW_algo}. We present the proof for the termination of a read operation.

Consider a fair execution with a finite number of writes. Consider a read operation $\pi_r$. For $i \geq 1$, let $s_i$ denote the point such that $\pi_r$ initialed iteration $i$ and $t_i$ denote the point that $\pi_r$ received responses from a quorum of servers, denoted $Q_i$.

Assume that $\pi_r$ did not terminate at the end of iteration $i$ for some $i \geq 1$.  
If $Q_{i}= Q_{i+1}$ and the responses from each server $s \in Q_{i}= Q_{i+1}$ are the same in both iterations, then, it follows that each server $s \in Q_{i}= Q_{i+1}$ has not changed its state, and hence its response, between points $t_{i}$ and $s_{i+1}$. Thus, $\pi_r$ has observed an instantaneous image of the system at point $t_i$. By Lemma \ref{persistence_data_mw_1}, using its responses from $Q_i$, $\pi_r$ could have recovered a value with tag satisfying Line \ref{algo1_line_12} in Figure \ref{FW_termination_SW_algo}. Henceforth, $\pi_r$ could have terminated, a contradiction. It follows that either $Q_{i} \neq Q_{i+1}$ or  $Q_{i}= Q_{i+1}$ and there exists a server $s \in Q_{i}= Q_{i+1}$ that has responded with different $(tag, element)$ pairs in $Q_{i}$ and $Q_{i+1}$.

Note that each server can only increase the stored tag, or change from a replica to a coded symbol for the same tag. Because we assume an execution with a finite number of writes, each server may change its stored $(tag, element)$ pair, and thus its responses, only a finite number of times. Moreover, as there are finitely many quorum sets, there must exist an iteration $j\geq 1$ such that if $\pi_r$ failed in iteration $j$, then $Q_{j}=Q_{j+1}$, and each server in $s \in Q_{j}= Q_{j+1}$ replies with the same $(tag, element)$ pair. By the previous argument, $\pi_r$ will terminate at the end of iteration $j+1$. Moreover, $j$ is finite and satisfies $j \le  2( \binom{N}{N-f}+ \ldots+\binom{N}{N } )  N_w =   (2^{N+1}- 2^{N-f }) N_w$, where $N_w$ is the number of writes during the execution.
\end{proof}

\subsection{Storage and Communication Costs of Algorithm \ref{Algorithm_mw_1}}
\begin{theorem}
The worst-case storage cost of Algorithm \ref{Algorithm_mw_1} among all points in all executions corresponds to $k+2f+\frac{N-k-2f}{k}$. The steady-state storage cost is given by $\frac{N}{k}$. The write communication cost is $k+2f+\frac{N-k-2f}{k}$.
The worst-case read communication cost is $ 2(k+2f + \frac{N-k-f}{k})$, including the \emph{write$\_$back} phases. Here $k=\lceil \frac{N-2f}{\nu} \rceil$.
\end{theorem}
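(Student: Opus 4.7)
The plan is to establish the four quantities separately, each by inspection of the write/read protocols and the server state rules.

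For the \emph{steady-state storage}, I would argue that at a steady-state point $P$ every completed write has delivered all of its \emph{finalize} messages to every live server. Consider in particular the completed write $\pi_w$ with the largest tag; it has delivered a coded symbol to every live server, and by the server protocol's $t_{new} \ge t$ acceptance rule for coded symbols, this coded symbol replaces any replica previously stored at that server. Hence every server stores exactly one coded symbol of size $1/k$, for a total of $N/k$ units.

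For the \emph{worst-case storage}, the key observation is that a server stores a full replica only if it has received a \emph{pre-write} message, and the write protocol sends pre-write messages only to servers in $\{1, \ldots, k+2f\}$. Thus at any point at most $k+2f$ servers can hold a replica (1 unit each) and the remaining $N-k-2f$ servers hold coded symbols of size $1/k$, giving the upper bound $k+2f + (N-k-2f)/k$. I would then show that the bound is tight by exhibiting an execution in which a writer has just completed its pre-write phase so that the first $k+2f$ servers store its full replica while the remaining servers still hold coded symbols from a previously completed write.

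For the \emph{write communication cost}, I would count bits phase by phase: the query phase exchanges only tags (metadata, neglected); the pre-write phase sends the full value to $k+2f$ servers, contributing $k+2f$ units; and the finalize phase, using the optimization in Remark~\ref{saving_communication_cost}, sends coded symbols only to the last $N-k-2f$ servers (the first $k+2f$ receive only a tag update), contributing $(N-k-2f)/k$ units. Summing yields $k+2f + (N-k-2f)/k$.

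For the \emph{read communication cost}, the reader's total traffic is the data received in the \emph{get} phase plus the data sent in the \emph{write\_back} procedure. In the get phase the reader accumulates $(tag, element)$ pairs in which each element is either a full replica or a coded symbol; the worst case is bounded by at most $k+2f$ replica responses and the remaining responses contributing $1/k$ units each. The \emph{write\_back} procedure then consists of a pre-write to $k+2f$ servers plus a finalize phase whose cost is counted as in the write analysis. Combining these contributions gives $2(k+2f + (N-k-f)/k)$. The main obstacle is this last bound, because a tight accounting requires some care about exactly how many get-phase responses the reader may collect (possibly more than the minimal quorum) and about which write\_back-phase optimizations are applied; the other three claims follow by direct inspection of the protocols and of which set of servers receive replicas versus coded symbols.
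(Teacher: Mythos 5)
Your proposal follows essentially the same route as the paper's proof: identify the first $k+2f$ servers as the only possible replica holders (giving the worst-case bound), note that at steady state the maximum-tag write's finalize messages overwrite all replicas with $1/k$-sized coded symbols, tally the write cost as pre-write replicas plus optimized-finalize coded symbols per Remark~\ref{saving_communication_cost}, and treat the read cost as get-phase traffic plus a write\_back with the same structure as a write. Your steady-state argument is a bit more explicit than the paper's (which simply asserts the replica-to-coded transition) but is the same idea.

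One point worth noting, and it vindicates your hesitation about the last bound: the paper's own proof text says the worst-case read cost is ``twice the write communication cost,'' which would give $2\bigl(k+2f+\tfrac{N-k-2f}{k}\bigr)$, yet the theorem statement reads $2\bigl(k+2f+\tfrac{N-k-f}{k}\bigr)$, differing by $2f/k$. Your sketched accounting (at most $k+2f$ replica responses in the get phase plus coded responses, then a write\_back costing the same as a write) also lands on twice the write cost, not on the stated expression. So the gap you flagged is real, but it lies in the paper's own bookkeeping rather than in your argument; you cannot be expected to re-derive a formula that the source proof does not itself derive consistently. Apart from this discrepancy, the decomposition, the reliance on the pre-write target set $\{1,\ldots,k+2f\}$, and the use of the finalize optimization are all exactly what the paper does.
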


\begin{proof}
The first $k+2f$ servers stores a full replica in the worst case and stores a coded symbol in the steady state; the remaining servers stores a coded symbol. Hence the storage cost results hold.

A write operation proceeds in three phases. As we do not account for the cost of tags, the cost of a write operation is dominated by the cost of its \emph{pre-write} and \emph{finalize} phases. 
A writer sends its uncoded value to the first $k+2f$ servers in the \emph{pre-write} phase. Based on Remark \ref{saving_communication_cost}, in the \emph{finalize} phase, the writer needs to send coded symbols to only $N-k-2f$ servers. In total, the write communication cost is at most
$
k+2f + \frac{N-k-2f}{k}
$. 
The worst-case read cost corresponds to twice the write communication cost and it corresponds to a situation in which a reader needs to write back its value.
\end{proof}

We note that the worst-case storage of Algorithm \ref{Algorithm_mw_1} is incurred during each write operation. Moreover, one can see that while Algorithm 2 has higher worst-case storage than ABD, but its steady-state storage outperforms ABD when $\nu \le 2f+1$ (cf. Equation \eqref{eq:storage_comparison_with_ABD}).

\subsection{Algorithm \ref{Algorithm_mw_1}-A: Algorithm with Asymptotically Optimal Storage Cost}
\label{finite_writers}
In this section, we assume that $\nu \geq 2$ and $k>1$. We present a multi-writer algorithm that is similar to Algorithm \ref{Algorithm_mw_1}, except that the write procedure is limited to two phases. We further assume throughout this section that during all executions of the system, the following condition is satisfied:

\noindent \textbf{Condition 1:} \emph{at any point during the execution, the number of concurrent writes is smaller than $\nu$}. 

\noindent\textbf{Algorithm \ref{Algorithm_mw_1}-A description: }
By virtue of Condition 1, the write protocol in Algorithm \ref{Algorithm_mw_1} can be simplified as the need for a \emph{pre-write} phase is obviated. After acquiring its tag, a writer propagates coded symbols to a quorum of servers before terminating. 
The read and server protocols are exactly the same as in Algorithm \ref{sw_algo}. 

We note that if the number of writers is less than $\nu$ (e.g., in applications with pre-determined write clients), then Condition 1 is ensured by default. In particular, the single-writer setting is a special case of Condition 1. That is the reason that Algorithm \ref{Algorithm_mw_1}-A is almost the same as Algorithm \ref{sw_algo}. 
%
%

The proof techniques used for Algorithm \ref{sw_algo} hold for Algorithm 2-A. For instance, the \textit{persistence of data} property follows from Lemma \ref{lem2}. The only difference lies in showing that any two write operations have distinct tags, which follows from the \emph{query} phase of 
Algorithm 2-A. The proofs for Algorithm \ref{sw_algo} do not depend on the identity of the writer and the same reasoning holds. Hence, atomicity and liveness, i.e., Theorems \ref{thm:atomicity_mw_1}, 
\ref{thm:writecompletes_mw_1} and \ref{thm:readcompletes_mw_1} hold for Algorithm 2-A.
\begin{theorem}
The storage cost of Algorithm \ref{Algorithm_mw_1}-A is $\frac{N}{\lceil \frac{N-2f}{\nu} \rceil}$ at any point in any execution, which is asymptotically optimal for $N\gg f, \nu \le f+1$.
\label{thm:storage_cost_2_A}
\end{theorem}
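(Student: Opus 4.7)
The plan is to split the proof into two parts: a direct calculation of the storage cost at every reachable state, and an asymptotic comparison against the lower bound of \cite{cadambe2016information}.

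For the exact storage cost, I would first observe that the server protocol and the format of the state variable in Algorithm \ref{Algorithm_mw_1}-A are inherited verbatim from Algorithm \ref{sw_algo}: each server maintains a single pair $(t, y)$ with $y \in \mathcal{W}$. The key structural point is that, because Algorithm \ref{Algorithm_mw_1}-A drops the \emph{pre-write} phase, every \emph{put} message that a server ever receives (from writes and from reader \emph{write\_back}s) carries a coded symbol in $\mathcal{W}$ rather than a full replica in $\mathcal{V}$; so at every point in every execution, the stored element at each server occupies exactly $\log_2|\mathcal{W}| = \tfrac{1}{k}$ unit. Summing over the $N$ servers gives total storage $\tfrac{N}{k}=\tfrac{N}{\lceil (N-2f)/\nu\rceil}$ uniformly in time, which is both the steady-state and worst-case storage.

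For asymptotic optimality, I would invoke the lower bound of \cite{cadambe2016information} that applies under $\nu$-concurrency wait-freedom with $f$ crash failures and the parameter regime $\nu \le f+1$ (this is the regime explicitly stated in the theorem and advertised in the Contributions section). Using \eqref{N_value}, rewrite the storage as
\begin{equation}
\frac{N}{k}=\frac{N}{\lceil (N-2f)/\nu\rceil}=\frac{N\nu}{N-2f}\cdot\bigl(1+o(1)\bigr) \qquad (N\gg f),
\end{equation}
so the ratio to $\nu$ tends to $1$ as $N\to\infty$ with $f,\nu$ fixed. I would then quote the lower bound from \cite{cadambe2016information}, which in the same regime $\nu \le f+1,\ N\gg f$ scales as $\nu$, and conclude that the ratio of the achieved cost to the lower bound tends to $1$, i.e., Algorithm \ref{Algorithm_mw_1}-A is asymptotically storage-optimal.

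The only non-routine step is the asymptotic comparison: the main obstacle is to cite the lower bound of \cite{cadambe2016information} in the correct form and verify that the parameter correspondence (their concurrency parameter versus our $\nu$, their failure parameter versus our $f$) is exactly the one that yields the matching constant. Everything else is immediate from the algorithm's description and Remark \ref{choice_nu}.
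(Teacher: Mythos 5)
Your proposal is correct and follows essentially the same route as the paper: compute the exact per-point storage cost (which the paper treats as immediate from the algorithm, since every server always holds a single coded symbol of size $1/k$ and there is no pre-write phase), then rewrite $N/k$ via Equation~\eqref{N_value} and compare asymptotically with the lower bound of \cite[Theorem~6.5]{cadambe2016information} for $\nu \le f+1$. The only difference is cosmetic: you spell out why no full replica is ever stored in Algorithm~\ref{Algorithm_mw_1}-A, whereas the paper takes this for granted, and the paper makes the matching to the lower bound explicit by writing the cost as $\frac{N\nu}{N-2f+\nu-1} \approx \frac{N\nu}{N-f+\nu-1}$ when $N \gg f$ rather than passing through the intermediate expression $\frac{N\nu}{N-2f}\bigl(1+o(1)\bigr)$.
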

 \begin{proof}
The storage cost at any point is $\frac{N}{\lceil \frac{N-2f}{\nu} \rceil}= \frac{N}{1+\frac{N-2f-1}{\nu}}=\frac{\nu}{N-2f+\nu-1}$ by Equation (\ref{N_value}). Therefore, when $N \gg f$, the storage cost is $\frac{\nu}{N-2f+\nu-1} \approx \frac{\nu}{N- f+\nu-1}$, which is the lower bound in \cite[Theorem 6.5]{cadambe2016information} for $\nu \le f+1$.
 \end{proof}
For executions such that there are less than $\nu$ writes at any point, a worst-case storage lower bound of $\Omega(min(f,\nu))$ is given in \cite{spiegelman2016space}  under lock-freedom, which is a weaker liveness property than 
$\nu$-concurrency wait-freedom. Algorithm \ref{Algorithm_mw_1}-A also meets this bound when $N \gg f, \nu \le f+1$. Other algorithms have also been proposed that match this bound. For example, the server can store all the concurrent coded symbols with coding parameter $N-2f$, achieving the worst-case storage of $\frac{\nu N}{N-2f}$ (e.g. \cite{cadambe2017coded}, and \cite{spiegelman2016space} for $N \gg f, v \le f+1$). But Algorithm \ref{Algorithm_mw_1}-A has a smaller multiplicative constant by a factor of up to $2$, which can significantly reduce the cost for systems such as memory-based data stores.

\section{Discussion and Conclusion} \label{sec:discussion}
Different from previous erasure code-based algorithms, our algorithms use a coding parameter $k$ that is determined by the liveness parameter $\nu$, given by $k = \lceil \frac{N-2f}{\nu} \rceil$. A system designer can choose $\nu$, and our algorithms guarantee the desired $\nu$-concurrency wait-freedom.
Due to the choice of this coding parameter, our algorithms store only one version of the coded symbol of size $\frac{1}{k}$ at each server. On the other hand, most previous erasure code-based algorithms use a larger coding parameter (typically $N-2f$), regardless of $\nu$, resulting in a smaller coded symbol per version, but more versions at each server. We discuss now the merits and the disadvantages of our approach.

We compare the storage cost of Algorithm \ref{sw_algo} with the cost incurred by previous algorithms.
When $\nu \le 2f+1$, the storage of Algorithm \ref{sw_algo} is no more than that of ABD (cf. Equation \eqref{eq:storage_comparison_with_ABD}). We also compare with the storage cost of the single-writer version of CASGC \cite{cadambe2017coded}, parameterized by $(k_{CASGC},\delta)=(N-2f, \nu -1)$, such that it offers the same liveness guarantees as Algorithm \ref{sw_algo}. The steady-sate storage cost of CASGC is shown to be $\frac{\nu N}{N-2f}$ and its worst-case storage is  $\frac{(\nu +1) N}{N-2f}$.
Interestingly, Algorithm \ref{sw_algo} can be up to twice as storage-efficient as CASGC for all values of the parameters $N, f, \nu$. Moreover, Algorithm \ref{sw_algo} has a simpler protocol structure than CASGC.

\begin{example}
For any $N,f$, let $\nu = N-2f-1$. Then, Algorithm \ref{sw_algo} uses coding parameter $k= 1 + \frac{N-(2f+1)}{\nu}=2$. The overall storage of Algorithm \ref{sw_algo} is then $\frac{N}{2}$. The steady-state storage of CASCG with parameters $(k_{CASGC},\delta)=(N-2f, N-2f -2)$ is $\frac{\nu N}{N-2f} = N(1-\frac{1}{N-2f})$, which goes to $N$ as $N-2f$ increases. Therefore, Algorithm \ref{sw_algo} can be close to twice as efficient as CASGC. Moreover, for the same choice of $\nu=N-2f-1$, whenever $\nu < 2f+1 \iff N <  4f +2 $, Algorithm \ref{sw_algo} improves upon ABD in terms of storage.
\end{example}

In addition, we compare multi-writer algorithms and assume an arbitrary number of concurrent writes during an execution.
We first compare Algorithm \ref{Algorithm_mw_1} to CASGC \cite{cadambe2017coded} with parameters $(k_{CASGC},$ $\delta)= (N-2f,\nu-1)$. 
Under these parameters, CASGC guarantees liveness of a read operation if the read is concurrent with at most $\nu-1$ write operations \cite[Theorem 4]{cadambe2017coded}, which matches Algorithm 2.
CASCG has a steady-state storage given by $\frac{\nu N}{N-2f}$. However, the worst-case storage of CASCG can be unbounded.
Algorithm \ref{Algorithm_mw_1} is advantageous compared to CASGC. It offers:
(1) smaller steady-state storage size given by $N/\lceil \frac{N-2f}{\nu} \rceil$, which can be close to twice as efficient, and (2) bounded worst-case storage.

Next we compare Algorithm \ref{Algorithm_mw_1} with SCCK \cite{spiegelman2016space}, which is an adaptive multi-writer multi-reader algorithm that combines replication and erasure codes. Assume its coding parameter is $k_{SCCK}=N-2f$, such that the steady-state storage is minimized.
The worst-case storage of SCCK is $2N$, which is at least twice the storage of ABD and Algorithm \ref{Algorithm_mw_1}. Meanwhile, SCCK has a steady-state storage $\frac{N}{N-2f}$, which is lower than Algorithm \ref{Algorithm_mw_1}. Note that SCCK only provides finite-write termination guarantees.  

While the steady-state storage of the multi-writer algorithms presented in this paper is higher compared to some other coding-based schemes, 
the in-place update for each write operation and the simple structure of the algorithms make them appealing from practical perspective and easy to implement.

In conclusion, we proposed fault-tolerant algorithms for emulating a shared memory over an asynchronous, distributed message-passing network. We first presented a single-writer multi-reader atomic shared memory algorithm, which forms the basis of our multi-writer multi-reader algorithms. 
Our algorithms guarantee liveness of the read as long as the number of writes concurrent with the read is smaller than a design parameter $\nu$. The parameter $\nu$ illustrates the tradeoff between liveness of read operations and the storage size per node. The overall steady-state storage and communication costs of our algorithms outperform ABD when $\nu < 2f+1$. 
An open problem is how to dynamically adapt to the concurrency level when it changes over time, so that the liveness condition is strengthened and the steady-state storage cost is lowered. Another interesting direction is to study the possibility of coding across several data objects in order to reduce the overall storage.
%
\bibliographystyle{abbrv}
\bibliography{biblio}

\end{document}